\long\def\@makecaption#1#2{%
  \vskip\abovecaptionskip
  \footnotesize
  \sbox\@tempboxa{\itshape\textsc{#1}. #2}%
  \ifdim \wd\@tempboxa >\hsize
    \itshape\textsc{#1}. #2\par
  \else
    \global \@minipagefalse
    \hb@xt@\hsize{\hfil\box\@tempboxa\hfil}%
  \fi
  \vskip\belowcaptionskip}
\numberwithin{equation}{section}
\newtheorem{theo}{Theorem}[section]
\newtheorem{lemma}{Lemma}[section]
\newtheorem{model}{Model}
\def\ie{\emph{i.e.}}
\DeclareMathOperator{\E}{\mathbb{E}}
\DeclareMathOperator{\lse}{\mathrm{LSE}}
\DeclareMathOperator{\defin}{\overset{\mathrm{def}}{=}}
\DeclareMathOperator{\1}{\mathbbm{1}}
\newcommand{\vvec}[1]{\mathbf{#1}}
\DeclareMathOperator{\R}{\mathbb{R}}
\DeclareMathOperator{\N}{\mathbb{N}}
\DeclareMathOperator{\K}{\mathcal{KL}}
\DeclareMathOperator{\B}{\mathcal{B}}
\DeclareMathOperator{\F}{\mathbb{F}}
\DeclareMathOperator{\Ft}{\mathcal{F}}
\DeclareMathOperator{\V}{\mathcal{V}}
\DeclareMathOperator{\M}{\mathcal{M}}
\DeclareMathOperator{\D}{\mathcal{D}}
\DeclareMathOperator{\m}{\mathbf{m}}
\DeclareMathOperator{\MP}{\mathcal{M}_{+,\pi}^1}
\DeclareMathOperator{\X}{\mathbf{X}}
\DeclareMathOperator{\Y}{\mathbf{Y}}
\DeclareMathOperator{\x}{\mathbf{x}}
\DeclareMathOperator{\Proba}{\mathbb{P}}
\DeclareMathOperator{\ProbaData}{\mathcal{P}}
\DeclareMathOperator{\agg}{\textsc{a}}
\newcounter{hyp}
\newcounter{cond}
\newenvironment{hyp}{\refstepcounter{hyp}\begin{itemize}
  \item[({\bf A\arabic{hyp}})]}{\end{itemize}}
\begin{document}

\begin{frontmatter}

% "Title of the paper"
\title{PAC-Bayesian estimation and prediction in sparse additive models}
\runtitle{PAC-Bayesian sparse additive prediction}

% indicate corresponding author with \corref{}
% \author{\fnms{John} \snm{Smith}\corref{}\ead[label=e1]{smith@foo.com}\thanksref{t1}}
% \thankstext{t1}{Thanks to somebody}
% \address{line 1\\ line 2\\ printead{e1}}
% \affiliation{Some University}
%
\begin{aug}
\author{\fnms{Benjamin} \snm{Guedj}\corref{}\thanksref{t1}\ead[label=e1]{benjamin.guedj@upmc.fr}}
\thankstext{t1}{Corresponding author.}
\address{Laboratoire de Statistique Th\'eorique et Appliqu\'ee\\
  Universit\'e Pierre et Marie Curie - UPMC\\
  Tour 25 - 2\`eme \'etage, bo\^ite n$^\circ$ 158\\
  4, place Jussieu\\
  75252 Paris Cedex 05, France\\
  \printead{e1}}
\end{aug}
\medskip
\textbf{\and}
\begin{aug}
\author{\fnms{Pierre}  \snm{Alquier}\thanksref{t2}\ead[label=e2]{pierre.alquier@ucd.ie}}
\thankstext{t2}{Research partially supported by the French ``Agence Nationale pour la Recherche''
under grant ANR-09-BLAN-0128 ``PARCIMONIE''.}
 \address{School of Mathematical Sciences\\
 University College Dublin\\
 Room 528 - James Joyce Library\\
 Belfield, Dublin 4, Ireland \\
 \printead{e2}}

    \runauthor{B. Guedj and P. Alquier}
\end{aug}

\begin{abstract}
  The present paper is about estimation and prediction in high-dimensional
  additive models under a sparsity assumption ($p\gg n$ paradigm). A PAC-Bayesian strategy
  is investigated, delivering oracle inequalities in probability. The
  implementation is performed through recent outcomes in
  high-dimensional MCMC algorithms, and the performance of our method
  is assessed on simulated data.
\end{abstract}
\begin{keyword}[class=AMS]
\kwd[Primary ]{62G08}
\kwd{62J02}
\kwd{65C40}
%\kwd[; secondary ]{}
\end{keyword}
\begin{keyword}
\kwd{Additive models}
\kwd{sparsity}
\kwd{regression estimation}
\kwd{PAC-Bayesian bounds}
\kwd{oracle inequality}
\kwd{MCMC}
\kwd{stochastic search}
\end{keyword}

\received{\smonth{8} \syear{2012}}

\tableofcontents

\end{frontmatter}
\vfill\eject

\section{Introduction}

Substantial progress has been achieved over the last years in
estimating high-dimensional regression models. A thorough introduction
to this dynamic field of contemporary statistics is provided by the
recent monographs \citet{B:htf,B:bv2g}. In the popular
framework of linear and generalized linear models, the Lasso estimator introduced
by \citet{A:tibshirani} immediately proved successful. Its theoretical properties
have been extensively studied and
its popularity has never wavered since then, see for example \citet{B:btw,A:v2g,A:brt,A:my}. However, even though
numerous phenomena are well captured within this linear context,
restraining high-dimensional statistics to this setting is
unsatisfactory. To relax the strong assumptions required in the linear
framework, one idea is to investigate a more general class of models,
such as nonparametric regression models of the
form $Y=f(X)+W$, where $Y$ denotes the response, $X$ the
predictor and $W$ a zero-mean noise. A good compromise between complexity and effectiveness
is the additive model. It has been extensively studied and formalized
for thirty years now. Amongst many other references, the reader is
invited to refer to \citet{A:stone,A:ht,B:ht,B:hardle}. The core of
this model is that the regression
function is written as a sum of univariate functions $f=\sum_{i=1}^p
f_i$, easing its interpretation. Indeed, each covariate's effect is
assessed by a unique function. This class of
nonparametric models is a popular setting in statistics, despite the
fact that
classical estimation procedures are known to perform poorly as soon as
the number of covariates $p$ exceeds the number of observations
$n$ in that setting.

In the present paper, our goal is to investigate a PAC-Bayesian-based
prediction strategy in the high-dimensional additive framework ($p\gg n$
paradigm). In that context, estimation is essentially possible at the price of a sparsity assumption, \ie, most of the $f_i$ functions are zero. More precisely, our
setting is non-asymptotic. As
empirical evidences of sparse representations accumulate,
high-dimensional statistics are more and more coupled with a sparsity
assumption, namely that the intrinsic dimension $p_0$ of the data is
much smaller than $p$ and $n$, see e.g. \citet{A:ghv}. Additive modelling under a sparsity
constraint has been essentially studied under the scope of the Lasso in \citet{A:mv2gb}, \citet{A:ss2012} and \citet{A:ky2010} or of a combination of
functional grouped Lasso and backfitting algorithm in
\citet{A:rllw}. Those papers inaugurated the
study of this problem and contain essential theoretical results consisting in
asymptotics (see \citet{A:mv2gb,A:rllw}) and non-asymptotics (see \citet{A:ss2012,A:ky2010}) oracle inequalities. The present article should be seen as a
constructive contribution towards a deeper understanding of prediction
problems in the additive framework. It should also be stressed that our
work is to be seen as an attempt to relax as much as possible
assumptions made on the model, such as restrictive conditions on the
regressors' matrix. We consider them too much of a
non-realistic burden
when it comes to prediction problems.\looseness=1

Our \emph{modus operandi} will be based on PAC-Bayesian
results, which is original in that context to our knowledge. The
PAC-Bayesian theory originates in the two seminal papers
\citet{A:stw,A:mcallester} and has been extensively formalized in the
context of classification (see \citet{B:catoni2004,B:catoni2007}) and
regression (see
\citet{A:audibert2004,PhD:audibert,PhD:alquier,A:alquier,A:ac2010,A:ac2011}). However,
the methods presented in these references are not explicitly designed to cover
the high-dimensional setting under the sparsity assumption. Thus, the
PAC-Bayesian theory has
been worked out in the sparsity perspective lately, by \citet{A:dt2008,A:dt2012a,A:al,A:rt}. The main message of these studies is that aggregation with
a properly chosen prior is able to deal effectively with the sparsity
issue. Interesting additional references addressing the aggregation outcomes
would be
\citet{PhD:rigollet,A:audibert2009}. The former aggregation procedures rely on an exponential weights approach,
achieving good statistical properties. Our method should be seen as an
extension of these techniques, and is particularly focused on additive
modelling specificities. Contrary to procedures such as the Lasso, the Dantzig
selector and other penalized methods which are provably consistent
under restrictive assumptions on the Gram matrix associated to the predictors,
PAC-Bayesian aggregation requires only minimal assumptions on the
model. Our method is supported by oracle inequalities in
probability, that are valid in both asymptotic and non-asymptotic
settings. We also show that our estimators achieve the optimal rate
of convergence over traditional smoothing classes such as Sobolev ellipsoids.
It should be stressed that our work is inspired by \citet{A:ab}, which addresses the celebrated
single-index model with similar
tools and philosophy. Let us also mention that although the use of PAC-Bayesian techniques are
original in this context, parallel work has been conducted in the deterministic
design case by \citet{A:suzuki}.\looseness=-1

A major difficulty when considering high-dimensional problems
is to achieve a favorable compromise between statistical and
computational performances. The recent and thorough monograph
\citet{B:bv2g} shall provide the reader with valuable insights that address this
drawback. As a consequence, the explicit implementation of PAC-Bayesian techniques
remains unsatisfactory as existing routines are only put to test with small values
of $p$ (typically $p<100$), contradicting with the high-dimensional
framework. In the meantime, as a solution of a convex problem the
Lasso proves computable for large values of $p$ in reasonable amounts of time. We therefore
focused on improving the computational aspect of our PAC-Bayesian
strategy. Monte Carlo Markov Chains (MCMC) techniques proved increasingly
popular in the Bayesian community, for they probably are the best way
of sampling from potentially complex probability
distributions. The reader willing to find a thorough introduction to
such techniques is invited to refer to the comprehensive monographs
\citet{B:mr,B:mt}. While \citet{A:ab,A:al} explore versions of the
reversible jump MCMC method (RJMCMC) introduced by \citet{A:green},
\citet{A:dt2008,A:dt2012a} investigate a Langevin-Monte
Carlo-based method, however only a deterministic design is
considered. We shall try to overcome those limitations by considering
adaptations of
a recent procedure whose comprehensive description is to be found in
\citet{PhD:petralias,A:pd}. This procedure called Subspace Carlin and
Chib algorithm originates in the seminal paper by \citet{A:cc}, and has a
close philosophy of \citet{A:hdw}, as it favors local moves for the
Markov chain. We provide numerical evidence that our method is
computationally efficient, on simulated data.

The paper is organized as follows. \autoref{S:maths} presents our PAC-Bayesian prediction strategy in additive models. In particular,
it contains the main theoretical results of this paper which consist in oracle inequalities. \autoref{S:mcmc}
is devoted to the implementation of our procedure, along with
numerical experiments on simulated data, presented in \autoref{S:simus}. Finally,
and for the sake of clarity, proofs have been postponed to
\autoref{S:proof}.

\section{PAC-Bayesian prediction}\label{S:maths}

Let $(\Omega,\mathcal{A},\Proba)$ be a probability space on which we denote by $\{(\X_i,Y_i)\}_{i=1}^n$ a sample of $n$ independent
and identically distributed (i.i.d.) random vectors in $(-1,1)^p\times\R$, with $\X_i = (X_{i1},\dots,X_{ip})$, satisfying
\begin{equation*}\label{eq:model}
Y_i=\psi^\star(\X_i)+\xi_i=\sum_{j=1}^p\psi_j^\star(X_{ij})+\xi_i, \quad i\in\{1,\dots,p\},
\end{equation*}
where $\psi_1^\star,\dots,\psi_p^\star$ are $p$ continuous functions $(-1,1)\to \R$ and $\{\xi_i\}_{i=1}^n$ is a set of i.i.d. (conditionaly to $\{(\X_i,Y_i)\}_{i=1}^n$) real-valued random variables. Let $\ProbaData$ denote the distribution of the sample $\{(\X_i,Y_i)\}_{i=1}^n$.
Denote by $\E$ the expectation computed with respect to $\Proba$ and let $\|\cdot\|_\infty$ be the supremum norm. We make the two following assumptions.
\begin{hyp}\label{As:hyp1}
  For any integer $k$, $\E[|\xi_1|^k]<\infty$, $\E[\xi_1|\X_1]=0$ and there exist two positive
  constants $L$ and $\sigma^2$ such that for any integer $k\geq 2$,
  $$
  \E[|\xi_1|^k|\X_1]
  \leq \frac{k!}{2}\sigma^2L^{k-2}.
  $$
\end{hyp}
\begin{hyp}\label{As:hyp2}
  There exists a constant
  $C>\max(1,\sigma)$ such that $\|\psi^\star\|_\infty \leq C$.
\end{hyp}
Note that \autoref{As:hyp1} implies that $\E \xi_1=0$ and
that the distribution of
$\xi_1$ may depend on $\X_1$.
In particular, \autoref{As:hyp1} holds if
$\xi_1$ is a zero-mean gaussian with variance $\gamma^2(\X_1)$ where $x\mapsto\gamma^2(x)$ is bounded.

Further, note that the boundedness assumption \autoref{As:hyp2} plays a
key role in our approach, as it allows to use a version of
Bernstein's inequality which is one of the two main technical
tools we use to state our results. This assumption is not only a
technical prerequisite since it proved crucial for critical
regimes: indeed, if the intrinsic dimension $p_0$ of the
regression function $\psi^\star$ is still large, the boundedness
of the function class allows much faster estimation rates. This
point is profusely discussed in \citet{A:rwy}.

We are mostly interested in sparse additive models, in which only a few $\{\psi^\star_j\}_{j=1}^p$ are not identically zero.
Let
$\{\varphi_k\}_{k=1}^\infty$ be a known countable set of
continuous functions $\R\to (-1,1)$
called the dictionary. In the sequel, $|\mathcal{H}|$ stands for the cardinality of a set $\mathcal{H}$.
For any $p$-th tuple $\m=(m_1,\dots,m_p)\in\N^p$, denote by $S(\m)\subset\{1,\dots,p\}$ the set of indices of nonzero elements of $\m$, \ie,
\begin{equation*}
|S(\m)|=\sum_{j=1}^p\1[m_j>0],
\end{equation*}
and define
\begin{equation*}
 \Theta_{\m}=\left\{\theta\in\R^{m_1}\times\dots\times\R^{m_p}\right\},
\end{equation*}
with the convention $\R^0=\emptyset$. The set $\Theta_{\m}$ is embedded with its canonical Borel field $\B(\Theta_{m})=\B(\R^{m_1})\otimes\dots\otimes\B(\R^{m_p})$. Denote by
\begin{equation*}
\Theta\defin\bigcup_{\m\in\M}\Theta_{\m},
\end{equation*}
which is equipped with the $\sigma$-algebra
$\mathcal{T}=\sigma\left(\bigvee_{\m\in\M}\B(\Theta_{\m})\right)$,
where $\M$ is the collection of models $\M=\{\m=(m_1,\dots,m_p)\in\N^p\}$.
Consider the span of the set $\{\varphi_k\}_{k=1}^\infty$, \ie, the set of functions
\begin{equation*}
 \F=\left\{\psi_\theta=\sum_{j\in S(\m)} \psi_j=\sum_{j\in S(\m)}\sum_{k=1}^{m_j}\theta_{jk}\varphi_k \colon \theta\in\Theta_{\m}, \m\in\M\right\},
\end{equation*}
equipped with a countable generated $\sigma$-algebra denoted by $\Ft$. The risk and empirical risk associated to any $\psi_\theta\in\F$ are
defined respectively as
\begin{equation*}
  R(\psi_\theta) = \E[Y_1-\psi_\theta(\X_1)]^2 \quad \text{ and } \quad R_n(\psi_\theta)=r_n(\{\X_i,Y_i\}_{i=1}^n,\psi_\theta),
\end{equation*}
where
\begin{equation*}
  r_n(\{\x_i,y_i\}_{i=1}^n,\psi_\theta)= \frac{1}{n}\sum_{i=1}^n\left(y_i-\psi_\theta(\x_i)\right)^2.
\end{equation*}
Consider the probability $\eta_{\alpha}$ on the set $\M$ defined by
\begin{equation*}
\eta_{\alpha}\colon \m\mapsto \frac{1-\frac{\alpha}{1-\alpha}}{1-\left(\frac{\alpha}{1-\alpha}\right)^{p+1}}\binom{p}{|S(\m)|}^{-1}\alpha^{\sum_{j=1}^p m_j},
\end{equation*}
for some $\alpha\in(0,1/2)$. Let us stress the fact that the probability $\eta_{\alpha}$ acts as a penalization term over a model $\m$, on the number of its active regressors through the combinatorial term $\binom{p}{|S(\m)|}^{-1}$ and on their expansion through $\alpha^{\sum_{j=1}^p m_j}$.

Our procedure relies on the following construction of the probability $\pi$, referred to as the prior, in order to promote the sparsity properties of the target regression function
$\psi^\star$. For any $\m\in\M$, $\zeta>0$ and $\x\in\Theta_{\m}$, denote by $\B^1_{\m}(\x,\zeta)$ the $\ell^1$-ball centered in $\x$ with radius $\zeta$. For any $\m\in\M$, denote by $\pi_{\m}$ the uniform distribution on $\B^1_{\m}(0,C)$. Define the probability $\pi$ on $(\Theta,\mathcal{T})$,
\begin{equation*}
\pi(A)=\sum_{\m\in\M}\eta_{\alpha}(\m)\pi_{\m}(A), \quad A\in\mathcal{T}.
\end{equation*}
Note that the volume $V_{\m}(C)$ of $\B^1_{\m}(0,C)$ is given by
\begin{equation*}
V_{\m}(C)=\frac{(2C)^{\sum_{j\in S(\m)} m_j}}{\Gamma\left({\sum_{j\in S(\m)} m_j}+1\right)}
  = \frac{(2C)^{\sum_{j\in S(\m)} m_j}}{\left({\sum_{j\in S(\m)} m_j}\right)!}.
\end{equation*}
Finally, set $\delta > 0$ (which may be interpreted as an inverse temperature parameter) and the posterior Gibbs transition density is
\begin{multline}\label{eq:def-gibbs-theta}
\rho_\delta(\{(\x_i,y_i)\}_{i=1}^n,\theta) \\ =\sum_{\m\in\M}\frac{\eta_{\alpha}(\m)}{V_{\m}(C)}\1_{\B_{\m}^1(0,C)}(\theta)\frac{\exp[-\delta r_n(\{\x_i,y_i\}_{i=1}^n,\psi_\theta)]}{\int \exp[-\delta r_n(\{\x_i,y_i\}_{i=1}^n,\psi_\theta)]\pi(\mathrm{d}\theta)}.
\end{multline}
We then consider two competing estimators. The first one is the
randomized Gibbs estimator $\hat{\Psi}$, constructed with parameters $\hat{\theta}$ sampled from the posterior Gibbs density, \ie, for any $A\in\Ft$,
\begin{equation}\label{eq:randomized}
  \Proba(\hat{\Psi}\in A | \{\X_i,Y_i\}_{i=1}^n)=\int_A \rho_\delta(\{\X_i,Y_i\}_{i=1}^n,\theta)\pi(\mathrm{d}\theta),
\end{equation}
while the second one is the aggregated Gibbs estimator
$\hat{\Psi}^{\agg}$ defined as the posterior mean
\begin{equation}\label{eq:aggregated}
  \hat{\Psi}^{\agg} = \int \psi_\theta \rho_\delta(\{\X_i,Y_i\}_{i=1}^n,\theta)\pi(\mathrm{d}\theta) = \E[\hat{\Psi}|\{\X_i,Y_i\}_{i=1}^n].
\end{equation}
These estimators have been introduced in
\citet{B:catoni2004,B:catoni2007} and investigated in further work by \citet{A:audibert2004,PhD:alquier,A:alquier,A:dt2008,A:dt2012a}.

For the sake of clarity, denote by $\D$ a generic numerical constant in the sequel.
We are now in a position to write a PAC-Bayesian oracle inequality.
\begin{theo}\label{T:additivemodels}
  Let $\hat{\psi}$ and
  $\hat{\psi}^{\agg}$ be realizations of the Gibbs
  estimators defined by \eqref{eq:randomized}--\,\eqref{eq:aggregated}, respectively. Let \autoref{As:hyp1} and \autoref{As:hyp2} hold. Set $w=8C\max(L,C)$ and
$\delta = n\ell/[w+4(\sigma^2+C^2)]$,
for $\ell\in(0,1)$, and let $\varepsilon\in(0,1)$. Then with $\Proba$-probability at least $1-2\varepsilon$,
  \begin{multline}\label{eq:additivemodels}
\left. \begin{array}{l}
R(\hat{\psi})- R(\psi^\star)
\\ R(\hat{\psi}^{\agg})- R(\psi^\star)
\end{array} \right\}
        \leq \D \underset{\m\in\M}{\inf}\ \underset{\theta\in\B_{\m}^1(0,C)}{\inf}
        \left\{ R(\psi_{\theta}) - R(\psi^\star) \vphantom{\frac{1}{2}} \right. \\ \left. +|S(\m)|\frac{\log(p/|S(\m)|)}{n}+\frac{\log(n)}{n}\sum_{j\in S(\m)}m_j+\frac{\log(1/\varepsilon)}{n} \right\},
\end{multline}
where $\D$ depends upon $w$, $\sigma$, $C$, $\ell$ and $\alpha$ defined above.
\end{theo}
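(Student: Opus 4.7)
The plan is the classical PAC-Bayesian \emph{localisation} scheme: derive a variational (Donsker--Varadhan) upper bound for the integrated excess risk under any data-free posterior, then feed it twice---once with the Gibbs posterior itself (which minimises the bound), and once with a test posterior concentrated in a small $\ell^1$-ball around an ideal $\theta$.

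\textbf{Step 1: Bernstein-type Laplace transform.} For $\theta\in\Theta$ put $T_i(\theta)=(Y_i-\psi_\theta(\X_i))^2-(Y_i-\psi^\star(\X_i))^2$. By \autoref{As:hyp2} one has $|T_i(\theta)|\le w=8C\max(L,C)$, and \autoref{As:hyp1} forces the conditional variance of $T_i(\theta)$ to be controlled by $4(\sigma^2+C^2)[R(\psi_\theta)-R(\psi^\star)]$. A standard Bernstein-type computation then gives, for $\delta=n\ell/[w+4(\sigma^2+C^2)]$ and some explicit $\kappa=\kappa(\ell)<1$,
\begin{equation*}
\E\exp\!\Bigl\{\delta[R(\psi_\theta)-R(\psi^\star)]-\delta[R_n(\psi_\theta)-R_n(\psi^\star)]-\kappa\,\delta[R(\psi_\theta)-R(\psi^\star)]\Bigr\}\le 1,
\end{equation*}
together with a mirror bound obtained by swapping the roles of $R$ and $R_n$. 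This is the only place where the precise Bernstein tuning of $\delta$ is used.

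\textbf{Step 2: From Bernstein to a variational oracle inequality.} Integrating the previous displays against an arbitrary $\rho\ll\pi$, applying Fubini and the Donsker--Varadhan change-of-measure identity
\begin{equation*}
\int h(\theta)\,\rho(\mathrm{d}\theta)\le \K(\rho,\pi)+\log\int\exp(h(\theta))\,\pi(\mathrm{d}\theta),
\end{equation*}
and combining with a Markov bound yields, with $\Proba$-probability at least $1-2\varepsilon$ and simultaneously over every $\rho\ll\pi$,
\begin{equation*}
(1-\kappa)\!\int[R(\psi_\theta)-R(\psi^\star)]\,\rho(\mathrm{d}\theta)\le (1+\kappa)\!\int[R_n(\psi_\theta)-R_n(\psi^\star)]\,\rho(\mathrm{d}\theta)+\frac{2(\K(\rho,\pi)+\log(1/\varepsilon))}{\delta}.
\end{equation*}
Substituting the Gibbs posterior $\rho=\rho_\delta\,\pi$ into the left-hand side is legitimate because, by construction, it \emph{minimises} the right-hand side over all probabilities absolutely continuous with respect to $\pi$; the aggregated version $\hat{\Psi}^{\agg}$ is then covered by Jensen's inequality on the convex squared loss. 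Chaining with the mirror inequality for an arbitrary test posterior $\rho_0$ produces the key oracle form: w.p. $\ge 1-2\varepsilon$, for every $\rho_0\ll\pi$,
\begin{equation*}
R(\hat\psi)-R(\psi^\star)\;\le\;\D \!\left\{\int[R(\psi_\theta)-R(\psi^\star)]\,\rho_0(\mathrm{d}\theta)+\frac{\K(\rho_0,\pi)+\log(1/\varepsilon)}{\delta}\right\},
\end{equation*}
and identically for $\hat{\psi}^{\agg}$.

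\textbf{Step 3: Localised choice of $\rho_0$ and KL budget.} Given $\m\in\M$ and $\theta\in\B_\m^1(0,C)$, let $\rho_0$ be $\pi_\m$ restricted and renormalised to $\B_\m^1(\theta,\zeta)\cap \B_\m^1(0,C)$ with $\zeta=1/n$. Since $\|\varphi_k\|_\infty\le 1$, any $\theta'$ in that ball satisfies $\|\psi_{\theta'}-\psi_\theta\|_\infty\le\zeta$, so $\int R(\psi_{\theta'})\rho_0(\mathrm{d}\theta')-R(\psi_\theta)=O(C/n)$. The KL term splits as
\begin{equation*}
\K(\rho_0,\pi)=-\log\eta_\alpha(\m)+\Bigl(\sum_{j\in S(\m)}m_j\Bigr)\log(C/\zeta),
\end{equation*}
and $\log\binom{p}{|S(\m)|}\le|S(\m)|\log(ep/|S(\m)|)$ together with the explicit form of $\eta_\alpha$ reproduces exactly the three complexity terms $|S(\m)|\log(p/|S(\m)|)/n$, $\log(n)\sum m_j/n$ (from $\log(C/\zeta)=O(\log n)$) and $\log(1/\varepsilon)/n$ of \eqref{eq:additivemodels}. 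Taking the infimum over $\m$ and $\theta$ concludes.

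\textbf{Main obstacle.} The technically most delicate point is the variance control in Step 1: one must bound the conditional variance of $T_i(\theta)$ by a quantity \emph{linear} in the excess risk $R(\psi_\theta)-R(\psi^\star)$ (not by a uniform constant), so that the resulting quadratic term $\kappa\,\delta\int[R(\psi_\theta)-R(\psi^\star)]\rho(\mathrm{d}\theta)$ can be absorbed into the left-hand side. This absorption is precisely what forces the choice $\delta=n\ell/[w+4(\sigma^2+C^2)]$, the restriction $\ell\in(0,1)$, and the factor $(1-\kappa)^{-1}$ hidden inside the generic constant $\D$.
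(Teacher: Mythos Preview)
Your overall strategy coincides with the paper's: a Bernstein-type moment bound on $T_i(\theta)=(Y_i-\psi_\theta(\X_i))^2-(Y_i-\psi^\star(\X_i))^2$, the Donsker--Varadhan (Catoni) variational identity, the fact that $\rho_\delta$ minimises the resulting empirical bound, Jensen's inequality for the aggregated estimator, and localisation via the uniform distribution on a small $\ell^1$-ball around a target $\theta$. The paper optimises the radius to $\zeta=\sqrt{\sum_{j\in S(\m)} m_j/(2n)}$ rather than fixing $\zeta=1/n$, but both choices lead to the same $\log(n)\sum_{j\in S(\m)} m_j/n$ complexity term up to constants absorbed into $\D$.

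There is, however, one incorrect assertion in your Step~1: you claim that \autoref{As:hyp2} alone yields $|T_i(\theta)|\le w$. This is false. Writing
\[
T_i(\theta)=\bigl(2\xi_i+\psi^\star(\X_i)-\psi_\theta(\X_i)\bigr)\bigl(\psi^\star(\X_i)-\psi_\theta(\X_i)\bigr),
\]
and noting that under \autoref{As:hyp1} the noise $\xi_i$ is in general unbounded (only moment bounds are assumed), one sees that $T_i(\theta)$ is unbounded as well. The constant $w=8C\max(L,C)$ does \emph{not} arise as a sup-norm bound on $T_i$; it is the Bernstein scale in the moment inequality
\[
\sum_{i=1}^n\E[(T_i)_+^k]\le \frac{k!}{2}\,v\,w^{k-2},\qquad v=8n(\sigma^2+C^2)\bigl[R(\psi_\theta)-R(\psi^\star)\bigr],
\]
obtained by combining the moment condition of \autoref{As:hyp1} with $|\psi_\theta-\psi^\star|\le 2C$. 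You correctly identify this variance control (linear in the excess risk) as the crux, and once the erroneous uniform bound is replaced by the full moment computation, the remainder of your argument goes through unchanged.
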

Under mild assumptions, \autoref{T:additivemodels} provides inequalities which admit the following interpretation.
If there exists a ``small'' model in the collection $\M$, \ie, a model $\m$ such that $\sum_{j\in S(\m)}m_j$ and $|S(\m)|$ are small, such that $\psi_{\theta}$ (with $\theta\in\Theta_{\m}$) is close to $\psi^\star$, then $\hat{\psi}$ and $\hat{\psi}^{\agg}$ are also close to $\psi^\star$ up to $\log(n)/n$ and $\log(p)/n$ terms. However, if no such model exists, at least one of the terms $\sum_{j\in S(\m)}m_j/n$ and $|S(\m)|/n$ starts to emerge, thereby deteriorating the global quality of the bound. A satisfying estimation of $\psi^\star$ is typically possible when $\psi^\star$ admits a sparse representation.

%As explained in \citet{A:ab}, it is still possible to derive oracle inequalities for our PAC-Bayesian estimators \eqref{eq:randomized}-\eqref{eq:aggregated}, even if \ref{As:hyp1} no longer holds. Indeed, \citet{A:s_al} provides a
%concentration inequality which is an alternative to Bernstein's
%inequality, holding with less restrictive assumptions. However,
%using this inequality would deteriorates the quality of the
%right-hand term in inequalities
%\eqref{eq:additivemodels}--\eqref{T:additivemodels:aggregated}.
%
%\citet{A:ac2011} investigates the properties of PAC-Bayesian
%estimates under an assumption which is weaker than \autoref{As:hyp1}. Nevertheless, their results hold only for the
%linear models, and to our knowledge no extension exists for
%our setting so far.

To go further, we derive from \autoref{T:additivemodels} an inequality on
Sobolev ellipsoids. We show that our
procedure achieves the optimal rate of convergence in this setting.
For the sake of shortness, we consider Sobolev spaces, however one can easily derive the following results in other functional spaces such as Besov spaces. See \citet{B:tsybakov} and the references therein.

The notation $\{\varphi_k\}_{k=1}^\infty$ now refers to the
(non-normalized) trigonometric system, defined as
\begin{equation*}
  \varphi_1\colon t \mapsto 1, \quad \varphi_{2j}\colon t \mapsto
  \cos(\pi j t), \quad \varphi_{2j+1}\colon t \mapsto
  \sin(\pi j t),
\end{equation*}
with $j\in\N^*$ and $t\in(-1,1)$.
Let us denote by
$S^\star$ the set of indices of non-identically zero regressors. That
is, the regression function $\psi^\star$ is
\begin{equation*}
  \psi^\star=\sum_{j\in S^\star}\psi_j^\star.
\end{equation*}
Assume that for any $j\in S^\star$, $\psi_j^\star$ belongs to the Sobolev ellipsoid
$\mathcal{W}(r_j,d_j)$ defined as
\begin{equation*}
  \mathcal{W}(r_j,d_j) = \left\{f\in \mathrm{L}^2([-1,1])\colon
    f=\sum_{k=1}^\infty\theta_{k}\varphi_k \quad\mathrm{and}\quad
    \sum_{i=1}^\infty i^{2r_j}\theta_{i}^2\leq d_j \right\}.
\end{equation*}
with $d_j$ chosen such that $\sum_{j\in S^\star}\sqrt{d_j}\leq C\sqrt{6}/\pi$ and
for unknown regularity parameters
$r_1,\dots,r_{|S^\star|}\geq 1$.
Let us stress the
fact that this assumption casts our results onto the adaptive setting. It also implies that $\psi^\star$ belongs to the
Sobolev ellipsoid $\mathcal{W}(r,d)$, with
$r=\min_{\substack{j\in S^\star}}r_j$ and
$d=\sum_{j\in S^\star}d_j$
% \max_{\substack{j\in S^\star}}d_j$
, \ie,
\begin{equation}\label{eq:sob}
  \psi^\star=\sum_{j\in
    S^\star}\sum_{k=1}^\infty \theta_{jk}^\star\varphi_k.
\end{equation}
It is worth pointing out that in that setting, the Sobolev ellipsoid
is better approximated by the $\ell^1$-ball
$\B_{\m}^1(0,C)$ as the dimension of $\m$ grows.
Further, make the following assumption.
\begin{hyp}\label{As:hyp3}
  The distribution of the data $\ProbaData$ has a probability density with respect to the
  corresponding Lebesgue measure, bounded from above by a
  constant $B>0$.
\end{hyp}
\begin{theo}\label{T:sob}
  Let $\hat{\psi}$ and
  $\hat{\psi}^{\agg}$ be realizations of the Gibbs
  estimators defined by \eqref{eq:randomized}--\,\eqref{eq:aggregated}, respectively. Let \autoref{As:hyp1}, \autoref{As:hyp2} and \autoref{As:hyp3} hold. Set $w=8C\max(L,C)$ and
$\delta = n\ell/[w+4(\sigma^2+C^2)]$,
for $\ell\in(0,1)$, and let $\varepsilon\in(0,1)$. Then with $\Proba$-probability at least
  $1-2\varepsilon$,
  \begin{multline*}
  \left. \begin{array}{l}
R(\hat{\psi})- R(\psi^\star)
\\ R(\hat{\psi}^{\agg})- R(\psi^\star)
\end{array} \right\}
    \leq \\ \D
    \left\{ \sum_{j\in S^\star} d_j^{\frac{1}{2r_j+1}}\left(\frac{\log(n)}{2nr_j}\right)^{\frac{2r_j}{2r_j+1}} + |S^\star|\log(p/|S^\star|)/n
     + \frac{\log(1/\varepsilon)}{n} \right\},
  \end{multline*}
  where $\D$ is a constant depending only on $w$, $\sigma$, $C$, $\ell$, $\alpha$ and $B$.
\end{theo}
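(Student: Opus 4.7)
The plan is to deduce \autoref{T:sob} directly from the oracle inequality of \autoref{T:additivemodels} by constructing a convenient competitor $(\m,\theta)$ tailored to the Sobolev structure and then optimizing the truncation level in each active coordinate. The key observation is that \autoref{T:additivemodels} holds for \emph{every} $\m\in\M$ and every $\theta\in\B_{\m}^1(0,C)$, so it suffices to exhibit one good choice.

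First, I would fix the model $\m^\circ\in\M$ supported exactly on $S^\star$, namely $m_j^\circ=0$ for $j\notin S^\star$ and $m_j^\circ=M_j$ for $j\in S^\star$, where the integer truncation levels $M_j$ are to be optimized later. As candidate parameter, take the Fourier coefficients of $\psi^\star$ truncated at level $M_j$: $\theta_{jk}=\theta_{jk}^\star$ for $j\in S^\star$, $k\leq M_j$, where $\theta_{jk}^\star$ are the coefficients appearing in \eqref{eq:sob}. The first non-trivial step is to verify $\theta\in\B_{\m^\circ}^1(0,C)$. For $r_j\geq 1$, Cauchy--Schwarz yields
\begin{equation*}
\sum_{k=1}^{M_j}|\theta_{jk}^\star|\leq \Bigl(\sum_{k\geq 1}k^{-2r_j}\Bigr)^{1/2}\Bigl(\sum_{k\geq 1}k^{2r_j}(\theta_{jk}^\star)^2\Bigr)^{1/2}\leq \frac{\pi}{\sqrt 6}\sqrt{d_j},
\end{equation*}
using $\sum k^{-2}=\pi^2/6$; summing over $j\in S^\star$ and invoking the assumption $\sum_{j\in S^\star}\sqrt{d_j}\leq C\sqrt 6/\pi$ gives $\|\theta\|_1\leq C$ as required.

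Next, I would bound the approximation error $R(\psi_\theta)-R(\psi^\star)$. Since $\psi^\star(\X_1)-\psi_\theta(\X_1)$ is the tail of the Fourier series and $\xi_1$ is centred conditionally on $\X_1$ by \autoref{As:hyp1}, the excess risk equals $\E[(\psi^\star(\X_1)-\psi_\theta(\X_1))^2]$. \autoref{As:hyp3} allows us to pass to $L^2([-1,1]^p)$ against Lebesgue measure up to the factor $B$; the trigonometric system being orthogonal, the classical tail bound for Sobolev ellipsoids gives
\begin{equation*}
R(\psi_\theta)-R(\psi^\star)\leq B\sum_{j\in S^\star}\sum_{k>M_j}(\theta_{jk}^\star)^2 \leq B\sum_{j\in S^\star}\frac{d_j}{M_j^{2r_j}}.
\end{equation*}

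Plugging $(\m^\circ,\theta)$ into \autoref{T:additivemodels} yields, up to constants,
\begin{equation*}
R(\hat\psi)-R(\psi^\star)\;\lesssim\;\sum_{j\in S^\star}\Bigl\{\frac{d_j}{M_j^{2r_j}}+\frac{M_j\log n}{n}\Bigr\}+\frac{|S^\star|\log(p/|S^\star|)}{n}+\frac{\log(1/\varepsilon)}{n},
\end{equation*}
and the same bound for $\hat\psi^{\agg}$. The final step is the standard bias--variance balancing: for each $j$ independently, minimize $d_j M_j^{-2r_j}+M_j\log(n)/n$ by choosing $M_j\asymp\bigl(n d_j r_j/\log n\bigr)^{1/(2r_j+1)}$, which produces the term
\begin{equation*}
d_j^{1/(2r_j+1)}\Bigl(\frac{\log n}{2nr_j}\Bigr)^{2r_j/(2r_j+1)}
\end{equation*}
announced in the statement. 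One must take the integer part $M_j=\lceil\cdot\rceil$, but since $M_j\geq 1$ for large $n$ and the rounding costs at most a constant factor, this is harmless.

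The main obstacle is the verification that the truncated Fourier coefficients lie in the $\ell^1$-ball $\B_{\m^\circ}^1(0,C)$: without \autoref{As:hyp2}'s compatibility with the Cauchy--Schwarz bound above (here crucially exploiting $r_j\geq 1$ and the normalization $\sum_{j\in S^\star}\sqrt{d_j}\leq C\sqrt 6/\pi$), one could not use \autoref{T:additivemodels} with this competitor and would have to introduce a scaling argument. The remaining arithmetic (optimization over $M_j$, absorbing $B$ and the Sobolev constants into $\D$) is routine and merely requires grouping the dependence on $w,\sigma,C,\ell,\alpha,B$ into a single numerical constant.
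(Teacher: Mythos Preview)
Your proposal is correct and follows essentially the same route as the paper: you invoke \autoref{T:additivemodels}, choose as competitor the truncated Fourier expansion of $\psi^\star$ supported on $S^\star$, verify via Cauchy--Schwarz (using $r_j\geq 1$ and the normalization $\sum_{j\in S^\star}\sqrt{d_j}\leq C\sqrt{6}/\pi$) that this competitor lies in $\B_{\m}^1(0,C)$, use \autoref{As:hyp3} and orthogonality of the trigonometric system to bound the approximation error by $\sum_{j\in S^\star} d_j M_j^{-2r_j}$, and finally balance bias and complexity coordinatewise. The only cosmetic difference is that the paper writes the tail bound as $d_j(1+m_j)^{-2r_j}$ rather than $d_j M_j^{-2r_j}$, which is immaterial up to constants.
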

\autoref{T:sob} illustrates that we obtain the minimax rate of
convergence over Sobolev classes up to a $\log(n)$ term. Indeed, the
minimax rate to estimate a single function with regularity $r$ is
$n^{\frac{2r}{2r+1}}$, see for example \citet[Chapter
2]{B:tsybakov}. \autoref{T:additivemodels} and \autoref{T:sob} thus
validate our method.

A salient fact about \autoref{T:sob} is its links with existing work: assume
that all the $\psi_j^\star$ belong to the same Sobolev ellipsoid
$\mathcal{W}(r,d)$. The convergence rate is now
$\log(n)n^{-\frac{2r}{2r+1}}+\log(p)/n$.
This rate (down to a
$\log(n)$ term) is the same as the one exhibited by \citet{A:ky2010} in the context of  multiple kernel learning ($n^{-\frac{2r}{2r+1}}+\log(p)/n$). \citet{A:ss2012} even obtain faster rates which
correspond to smaller functional spaces. However, the results presented
by both \citet{A:ky2010} and \citet{A:ss2012} are obtained under
stringent conditions on the design, which are not necessary to prove \autoref{T:sob}.

A natural extension is to consider sparsity on both regressors and their expansion, instead of sparse regressors and nested expansion as before. That is, we no longer consider the first $m_j$ dictionary functions for the expansion of regressor $j$. To this aim, we slightly extend the previous notation. Let $K\in\N^*$ be the length of the dictionary. A model is now denoted by $\m=(\m_1,\dots,\m_p)$ and for any $j\in\{1,\dots,p\}$, $\m_j=(m_{j1},\dots,m_{jK})$ is a $K$-sized vector whose entries are $1$ whenever the corresponding dictionary function is present in the model and $0$ otherwise. Introduce the notation
\begin{equation*}
S(\m)=\{\m_j\neq\vvec{0}, j\in\{1,\dots,p\}\},\quad S(\m_j)=\{m_{jk}\neq 0, k\in\{1,\dots,K\}\}.
\end{equation*}
The prior distribution on the models space $\M$ is now
\begin{equation*}
\eta_\alpha\colon\m\mapsto \frac{1-\alpha\frac{1-\alpha^{K+1}}{1-\alpha}}{1-\left(\alpha\frac{1-\alpha^{K+1}}{1-\alpha}\right)^{p+1}}\binom{p}{|S(\m)|}^{-1}\prod_{j\in S(\m)}\binom{K}{|S(\m_j)|}^{-1}\alpha^{|S(\m_j)|},
\end{equation*}
for any $\alpha\in(0,1/2)$.
\begin{theo}\label{T:coro}
Let $\hat{\psi}$ and
  $\hat{\psi}^{\agg}$ be realizations of the Gibbs
  estimators defined by \eqref{eq:randomized}--\,\eqref{eq:aggregated}, respectively. Let \autoref{As:hyp1} and \autoref{As:hyp2} hold. Set $w=8C\max(L,C)$ and
$\delta = n\ell/[w+4(\sigma^2+C^2)]$,
for $\ell\in(0,1)$, and let $\varepsilon\in(0,1)$.\vadjust{\eject} Then with $\Proba$-probability at least $1-2\varepsilon$,
\begin{multline*}
\left. \begin{array}{l}
R(\hat{\psi})- R(\psi^\star)
\\ R(\hat{\psi}^{\agg})- R(\psi^\star)
\end{array} \right\}
        \leq \D \underset{\m\in\M}{\inf}\ \underset{\theta\in\B_{\m}^1(0,C)}{\inf}
        \left\{ R(\psi_{\theta}) - R(\psi^\star) \vphantom{\frac{1}{2}} \right. \\ \left. +|S(\m)|\frac{\log(p/|S(\m)|)}{n}+\frac{\log(nK)}{n}\sum_{j\in S(\m)}|S(\m_j)|+\frac{\log(1/\varepsilon)}{n} \right\},
\end{multline*}
where $\D$ depends upon $w$, $\sigma$, $C$, $\ell$ and $\alpha$ defined above.
\end{theo}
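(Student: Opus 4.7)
The plan is to retrace the proof of \autoref{T:additivemodels}, modifying only the combinatorial bookkeeping to accommodate the richer model space. The backbone is the standard PAC-Bayesian inequality: applying Bernstein's inequality (justified by \autoref{As:hyp1} and \autoref{As:hyp2}) together with the Donsker--Varadhan variational formula gives, with $\Proba$-probability at least $1-2\varepsilon$,
\begin{equation*}
R(\hat\Psi)-R(\psi^\star)\leq \D\,\inf_{\rho\ll\pi}\left\{\int\bigl(R(\psi_\theta)-R(\psi^\star)\bigr)\rho(\mathrm{d}\theta)+\frac{\K(\rho,\pi)+\log(1/\varepsilon)}{n}\right\},
\end{equation*}
and the Jensen step handles $\hat\Psi^{\agg}$ simultaneously. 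All that changes in the new setting is the prior $\pi=\sum_{\m}\eta_\alpha(\m)\pi_\m$, where $\eta_\alpha$ now penalizes both the active regressors and their active dictionary coordinates.

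Next, for a fixed target $(\m,\theta)$ with $\theta\in\B_{\m}^1(0,C)$, I would choose as trial distribution $\rho$ the uniform law on a small $\ell^1$-ball of radius $\zeta\sim 1/n$ centered at $\theta$ inside $\B_{\m}^1(0,C)$, restricted to the model $\m$. The KL divergence then splits into a discrete and a continuous contribution:
\begin{equation*}
\K(\rho,\pi)=-\log\eta_\alpha(\m)+\log\frac{V_{\m}(C)}{V_{\m}(\zeta)}+O(1).
\end{equation*}
Using the Lipschitz/quadratic control of $R$ on $\B_{\m}^1(0,C)$ (exactly as in the proof of \autoref{T:additivemodels}), the integral $\int(R(\psi_{\theta'})-R(\psi_\theta))\rho(\mathrm{d}\theta')$ is of order $\zeta$, and optimizing $\zeta$ yields a contribution of order $\log(n)\sum_{j\in S(\m)}|S(\m_j)|/n$ from the volume ratio, since the effective dimension of $\B_{\m}^1(0,C)$ is now $\sum_{j\in S(\m)}|S(\m_j)|$.

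The only genuinely new calculation is the bound on $-\log\eta_\alpha(\m)$. Plugging in the new expression and using $\log\binom{p}{k}\leq k\log(ep/k)$ twice gives
\begin{equation*}
-\log\eta_\alpha(\m)\leq |S(\m)|\log\!\left(\frac{ep}{|S(\m)|}\right)+\sum_{j\in S(\m)}\!\!|S(\m_j)|\log\!\left(\frac{eK}{|S(\m_j)|}\right)+\D\sum_{j\in S(\m)}\!\!|S(\m_j)|\log\frac{1}{\alpha}.
\end{equation*}
Adding this to the $\log(n)\sum_{j\in S(\m)}|S(\m_j)|/n$ term from the continuous part collapses $\log(K)+\log(n)$ into $\log(nK)$, absorbing the $|S(\m_j)|$-dependent logarithm into the constant $\D$ (it is dominated by $\log K$). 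The remainder of the argument---concentration of the empirical risk around the true risk via Bernstein, and the handling of the Gibbs posterior through the usual dual bound---is literally the same as for \autoref{T:additivemodels}.

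The main obstacle is purely notational: one must keep track of the two levels of combinatorics (which regressors, and within each active regressor which dictionary functions) without double counting, and verify that the crude bound $|S(\m_j)|\log(K/|S(\m_j)|)\leq |S(\m_j)|\log(nK)$ is tight enough to recover the stated rate. Once that is settled, no new probabilistic ingredient is needed beyond those already invoked for \autoref{T:additivemodels}.
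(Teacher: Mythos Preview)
Your proposal is correct and follows essentially the same route as the paper: start from the general PAC-Bayesian oracle inequality (the paper's \autoref{T:regression}), localize $\rho$ to a uniform distribution on a small $\ell^1$-ball inside $\B_{\m}^1(0,C)$, split $\K(\rho,\pi)$ into the continuous volume ratio plus $-\log\eta_\alpha(\m)$, bound the latter via $\log\binom{n}{k}\leq k\log(ne/k)$ applied at both combinatorial levels, and optimize in $\zeta$. One small slip: the excess-risk integral over the ball is of order $\zeta^2$, not $\zeta$ (the cross term vanishes by symmetry, leaving $\E[\psi_\theta-\psi_{\bar\theta}]^2\leq\zeta^2$), so the optimal radius is $\zeta\sim\sqrt{\sum_j|S(\m_j)|/n}$ rather than $1/n$; this does not change the final $\log(n)\sum_j|S(\m_j)|/n$ contribution, so your conclusion stands.
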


\section{MCMC implementation}\label{S:mcmc}

In this section, we describe an implementation of the method outlined in the previous section. Our goal is to sample from the Gibbs
posterior distribution $\rho_{\delta}$. We use a version of the so-called Subspace Carlin and Chib (SCC) developed by \citet{PhD:petralias,A:pd} which originates in the Shotgun
Stochastic Search algorithm (see \citet{A:hdw}). The key idea of the algorithm lies in a stochastic search heuristic that restricts moves in neighborhoods of the visited models. Let $T\in\N^*$ and denote by $\{\theta(t),\m(t)\}_{t=0}^T$ the Markov chain of interest, with $\theta(t)\in\Theta_{\m(t)}$. Define $i\colon t\mapsto \{+,-,=\}$, the three possible moves performed by the algorithm: an addition, a deletion or an adjustment of a regressor. Let $\{\vvec{e}_1,\dots,\vvec{e}_p\}$ be the canonical base of $\R^p$. For any model $\m(t)=(m_1(t),\dots,m_p(t))\in\M$, define its neighborhood $\{\V^+[\m(t)],\V^-[\m(t)],\V^=[\m(t)]\}$, where
\begin{equation*}
\V^+[\m(t)]=\{\vvec{k}\in\M\colon \vvec{k}=\m(t)+x\vvec{e}_j, x\in\N^*, j\in\{1,\dots,p\}\backslash S[\m(t)]\},
\end{equation*}
\begin{equation*}
\V^-[\m(t)]=\{\vvec{k}\in\M\colon \vvec{k}=\m(t)-m_j(t)\vvec{e}_j, j\in S[\m(t)]\},
\end{equation*}
and
\begin{equation*}
\V^=[\m(t)]=\{\vvec{k}\in\M\colon S(\vvec{k})=S[\m(t)]\}.
\end{equation*}
A move $i(t)$ is chosen with probability $q[i(t)]$. By convention, if $S[\m(t)]=p$ (respectively $S[\m(t)]=1$) the probability of performing an addition move (respectively a deletion move) is zero. Note $\xi\colon\{+,-\}\mapsto\{-,+\}$ and let $D_{\m}$ be the design matrix in model $\m\in\M$. Denote by $\lse_{\m}$ the least square estimate $\lse_{\m}=(D_{\m}^\prime D_{\m})^{-1}D_{\m}^\prime\Y$ (with $\Y=(Y_1,\dots,Y_n)$) in model $\m\in\M$. For ease of notation, let $\mathcal{I}$ denote the identity matrix. Finally, denote by $\phi(\cdot;\mu,\Gamma)$ the density of a Gaussian distribution $\mathcal{N}(\mu,\Gamma)$ with mean $\mu$ and covariance matrix $\Gamma$. A description of the full algorithm is presented in \autoref{alg}.
\begin{algorithm}[t]
\caption{A Subspace Carlin and Chib-based algorithm}
\label{alg}
\begin{algorithmic}[1]
\STATE Initialize $(\theta(0),\m(0))$.
\FOR{$t=1$ to $T$}
%\STATE Generate $\theta_{\m(t-1)}$ from the truncated Gaussian distribution
%$\mathcal{N}(\lse_{\m_{t-1}},\delta^{-1}(\tra{D_{\m_{t-1}}}D_{\m_{t-1}})^{-1})$ on the ball $\B^1_{\m_{t-1}}(0,C)$.
 \STATE Choose a move $i(t)$ with probability $q[i(t)]$.
 \STATE For any $\vvec{k}\in\V^{i(t)}[\m(t-1)]$, generate $\theta_{\vvec{k}}$ from the proposal density $\phi(\cdot;\lse_{\vvec{k}},\sigma^2 \mathcal{I})$.
 \STATE Propose a model $\vvec{k}\in\V^{i(t)}[\m(t-1)]$ with probability
 \begin{equation*}
 \gamma(\m(t-1),\vvec{k})=\frac{A_{\vvec{k}}}{\sum_{\vvec{j}\in\V^{i(t)}[\m(t-1)]}A_{\vvec{j}}},
 \end{equation*}
 where
% \begin{equation*}
% A_{\vvec{j}}=\rho_\delta(\theta_{\vvec{j}})\eta_{\alpha}(\vvec{j})\1_{\B_{\vvec{j}}^1(0,C)}(\theta_{\vvec{j}})\prod_{\vvec{h}\neq\vvec{j}}\phi(\theta_{\vvec{h}};\lse_{\vvec{h}},\sigma^2 \mathcal{I}).
% \end{equation*}
 \begin{equation*}
 A_{\vvec{j}}=\frac{\rho_\delta(\theta_{\vvec{j}})}{\phi(\theta_{\vvec{j}};\lse_{\vvec{j}},\sigma^2 \mathcal{I})}.
 \end{equation*}
 \IF{$i(t)\in\{+,-\}$}
 \STATE For any $\vvec{h}\in\V^{\xi(i(t))}[\vvec{k}]$, generate $\theta_{\vvec{h}}$ from the proposal density $\phi(\cdot;\lse_{\vvec{h}},\sigma^2 \mathcal{I})$. Note that $\m(t-1)\in\V^{\xi(i(t))}[\vvec{k}]$.
 \STATE Accept model $\vvec{k}$, \ie, set $\m(t)=\vvec{k}$ and $\theta(t)=\theta_{\vvec{k}}$, with probability
 \begin{multline*}
 \alpha=\min\left(1,\frac{A_{\vvec{k}}q[i(t)]\gamma(\vvec{k},\m(t-1))}{A_{\m(t-1)}q[\xi(i(t))]\gamma(\m(t-1),\vvec{k})}\right) \\ =\min\left(1, \frac{q[i(t)]\sum_{\vvec{h}\in \V^{i(t)}[\m(t-1)]} A_{\vvec{h}}}{q[\xi(i(t))]\sum_{\vvec{h}\in\V^{\xi(i(t))}[\vvec{k}]} A_{\vvec{h}}} \right).
 \end{multline*}
 Otherwise, set $\m(t)=\m(t-1)$ and $\theta(t)=\theta_{\m(t-1)}$.
 \ELSE
 \STATE Generate $\theta_{\m(t-1)}$ from the proposal density $\phi(\cdot;\lse_{\m(t-1)},\sigma^2 \mathcal{I})$.
 \STATE Accept model $\vvec{k}$, \ie, set $\m(t)=\vvec{k}$ and $\theta(t)=\theta_{\vvec{k}}$, with probability
 \begin{equation*}
 \alpha=\min\left(1,\frac{A_{\vvec{k}}\gamma(\vvec{k},\m(t-1))}{A_{\m(t-1)}\gamma(\m(t-1),\vvec{k})}\right).
 \end{equation*}
 Otherwise, set $\m(t)=\m(t-1)$ and $\theta(t)=\theta_{\m(t-1)}$.
 \ENDIF
\ENDFOR
\end{algorithmic}
\end{algorithm}

The estimates $\hat{\Psi}$ and
$\hat{\Psi}^{\agg}$ are obtained as
\begin{equation*}
  \hat{\Psi} = \sum_{j=1}^p\sum_{k=1}^K\theta_{jk}(T)\varphi_k,
\end{equation*}
and for some burnin $b\in\{1,\dots,T-1\}$,
\begin{equation*}
  \hat{\Psi}^{\agg} =
  \sum_{j=1}^p\sum_{k=1}^K\left(\frac{1}{T-b}\sum_{\ell = b+1}^T\theta_{jk}(\ell)\right)\varphi_k.
\end{equation*}
The transition kernel of the chain defined above is reversible with respect to $\rho_\delta\otimes\eta_{\alpha}$, hence this procedure ensures that $\{\theta(t)\}_{t=1}^T$ is a Markov
Chain with stationary distribution $\rho_\delta$.

\section{Numerical studies}\label{S:simus}

In this section we validate the effectiveness of our method
on simulated data.
All our numerical studies have
been performed with the software R (see \citet{M:R}).
The method is available on the CRAN website (\url{http://www.cran.r-project.org/web/packages/pacbpred/index.html}), under the
name \texttt{pacbpred} (see \citet{R:pacbpred}).

Some comments are in order here about how to calibrate the constants $C$, $\sigma^2$, $\delta$ and $\alpha$. Clearly, a too small value for $C$ will stuck the algorithm, preventing
the chain to escape from the initial model. Indeed, most proposed models will be discarded since the acceptance ratio will frequently take the value $0$. Conversely, a
large value for $C$ deteriorates the quality of the bound
in \autoref{T:additivemodels}, \autoref{T:sob}, \autoref{T:coro} and \autoref{T:regression}. However, this only influences the
theoretical bound, as its contribution to the acceptance ratio is limited to $\log(2C)$. We
thereby proceeded with typically large values of $C$ (such
as $C=10^{6}$). As the parameter $\sigma^2$ is the variance of the proposal distribution $\phi$, the practioner should tune it in accordance with the noise level of the data. The parameter requiring the finest calibration is $\delta$: the convergence of the algorithm is sensitive to its choice. \citet{A:dt2008,A:dt2012a} exhibit the theoretical value $\delta = n/4\sigma^2$. This value leads to very good numerical performances, as it has been also noticed by \citet{A:dt2008,A:dt2012a,A:ab}. The choice for $\alpha$ is guided by a similar reasoning to the one for $C$. Its contribution to the acceptance ratio is limited to a $\log(1/\alpha)$ term. The value $\alpha=0.25$ was used in the simulations for its apparent good properties. Although it would be computationally costly, a finer calibration through methods such as cross-validation is possible.

Finally and as a general rule, we strongly encourage
practitioners to run several chains of inequal lengths and
to adjust the number of iterations needed by observing if
%$R_n(\Psi_{\hat{\theta}_t})$
the empirical risk is stabilized.

\begin{table}[t]
\caption{Each number is the mean (standard deviation) of the RSS over 10 independent runs}
\label{table:rss}
\begin{tabular}{c|ccc}
 & $p=50$ & $p=200$ & $p=400$  \\
 MCMC & $3000$ it. & $10000$ it. & $20000$ it. \\
\hline
 \autoref{mod1} & 0.0318 (0.0047) & 0.0320 (0.0029) & 0.0335 (0.0056) \\
 \autoref{mod2} & 0.0411 (0.0061) & 0.1746 (0.0639) & 0.2201 (0.0992) \\
 \autoref{mod3} & 0.0665 (0.0421) & 0.1151 (0.0399) & 0.1597 (0.0579) %\\
% \autoref{mod4} & &
\end{tabular}
\end{table}
\begin{model}\label{mod1}
$n=200$ and $S^\star=\{1,2,3,4\}$. This model is similar to \citet[Section 3, Example 1]{A:mv2gb} and is given by
\begin{equation*}
Y_i=\psi^\star_1(X_{i1})+\psi^\star_2(X_{i2})+\psi^\star_3(X_{i3})+\psi^\star_4(X_{i4})+\xi_i,
\end{equation*}
with
\begin{multline*}
\psi^\star_1\colon x\mapsto -\sin(2x), \quad \psi^\star_2\colon x\mapsto x^3, \quad \psi^\star_3\colon x\mapsto x, \\ \psi^\star_4\colon x\mapsto e^{-x}-e/2, \quad \xi_i\sim\mathcal{N}(0,0.1),\quad i\in\{1,\dots,n\}.
\end{multline*}
\end{model}
The covariates are sampled from independent uniform distributions over $(-1,1)$.
\begin{model}\label{mod2}
$n=200$ and $S^\star=\{1,2,3,4\}$. As above but correlated. The covariates are sampled from a multivariate gaussian distribution with covariance matrix $\Sigma_{ij}=2^{-|i-j|-2}$, $i$, $j\in\{1,\dots,p\}$.
\end{model}
\begin{model}\label{mod3}
$n=200$ and $S^\star=\{1,2,3,4\}$. This model is similar to \citet[Section 3, Example 3]{A:mv2gb} and is given by
\begin{equation*}
Y_i=5\psi^\star_1(X_{i1})+3\psi^\star_2(X_{i2})+4\psi^\star_3(X_{i3})+6\psi^\star_4(X_{i4})+\xi_i,
\end{equation*}
with
\begin{multline*}
\psi^\star_1\colon x\mapsto x, \quad \psi^\star_2\colon x\mapsto 4(x^2-x-1), \quad \psi^\star_3\colon x\mapsto \frac{\sin(2\pi x)}{2-\sin(2\pi x)}, \\ \psi^\star_4\colon x\mapsto 0.1\sin(2\pi x)+0.2\cos(2\pi x) + 0.3\sin^2(2\pi x)+0.4\cos^3(2\pi x) \\ +0.5\sin^3(2\pi x), \quad \xi_i\sim\mathcal{N}(0,0.5),\quad i\in\{1,\dots,n\}.
\end{multline*}
The covariates are sampled from independent uniform distributions over $(-1,1)$.
\end{model}
%\begin{model}\label{mod4}
%$n=200$ and $S^\star=\{1,\dots,12\}$. This model is in the spirit of \citet[Section 3, Example 4]{A:mv2gb} and is given by the same functions as in \autoref{mod3}.
%\begin{multline*}
%Y_i=\psi^\star_1(X_{i1})+\psi^\star_2(X_{i2})+\psi^\star_3(X_{i3})+\psi^\star_4(X_{i4}) \\ + 1.5\left[\psi^\star_1(X_{i5})+\psi^\star_2(X_{i6})+\psi^\star_3(X_{i7})+\psi^\star_4(X_{i8})\right] \\ +2\left[\psi^\star_1(X_{i9})+\psi^\star_2(X_{i10})+\psi^\star_3(X_{i11})+\psi^\star_4(X_{i12})\right] +\xi_i,
%\end{multline*}
%with $\xi_i\sim\mathcal{N}(0,0.5184)$, $i\in\{1,\dots,n\}$. The covariates are sampled from independent uniform distributions over $(-1,1)$.
%\end{model}

\begin{figure}[b!]
\begin{subfigure}[t]{.45\textwidth}
\caption{\autoref{mod1}, $p=200$.}
\includegraphics[width = \linewidth]{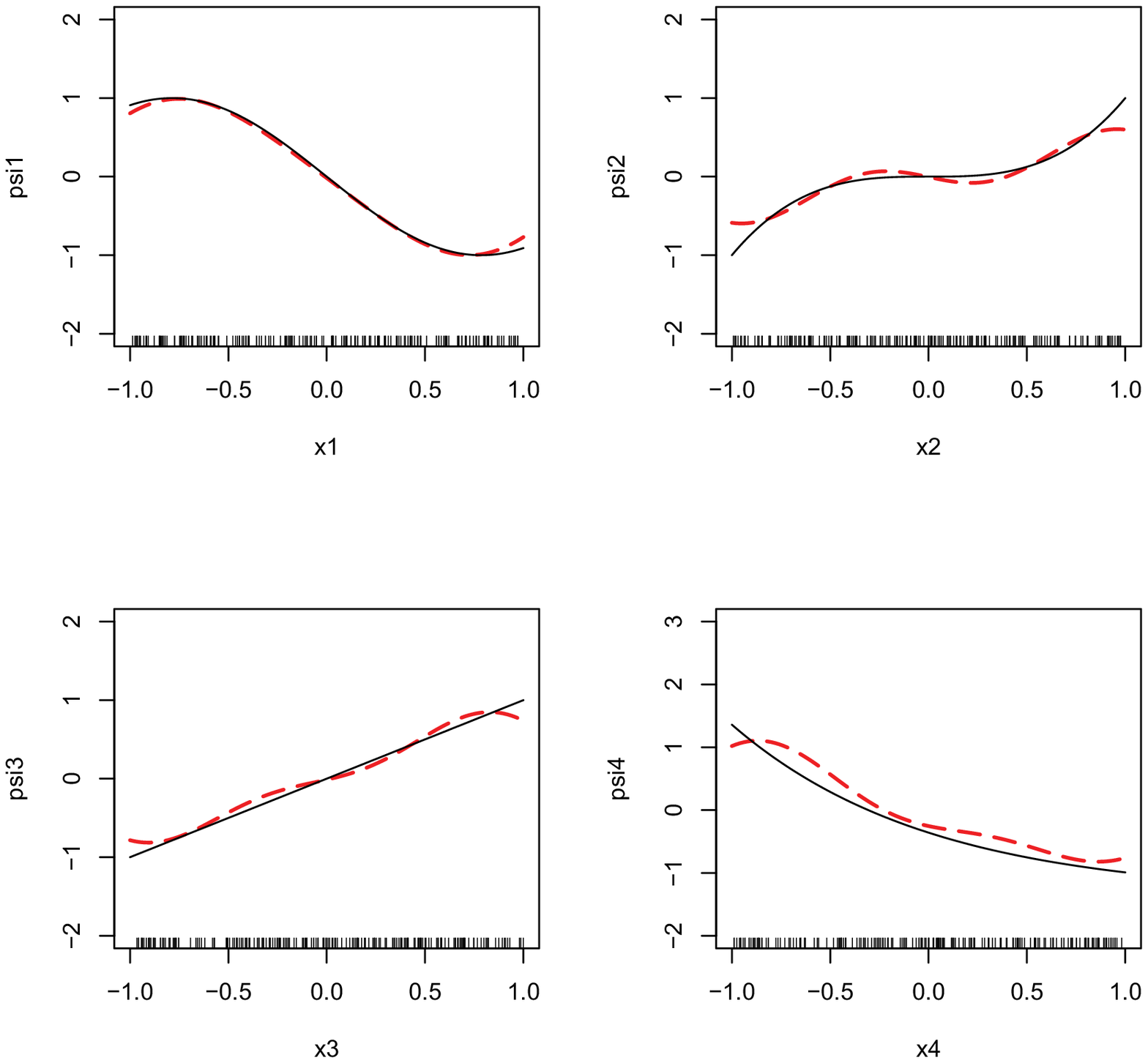}
\end{subfigure}
\begin{subfigure}[t]{.45\textwidth}
\caption{\autoref{mod1}, $p=400$.}
\includegraphics[width = \linewidth]{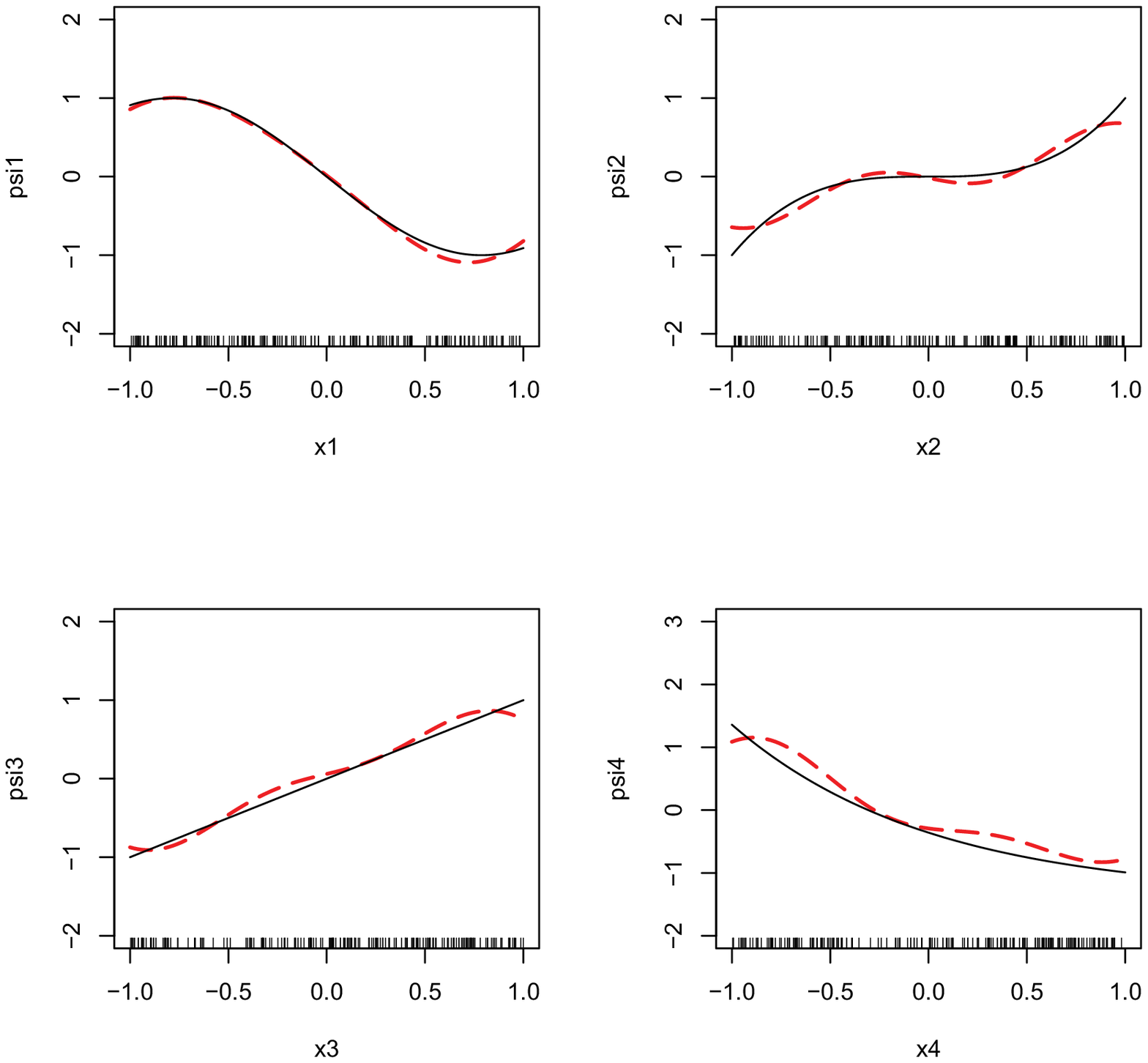}
\end{subfigure}
\begin{subfigure}[t]{.45\textwidth}
\caption{\autoref{mod2}, $p=50$.}
\includegraphics[width = \linewidth]{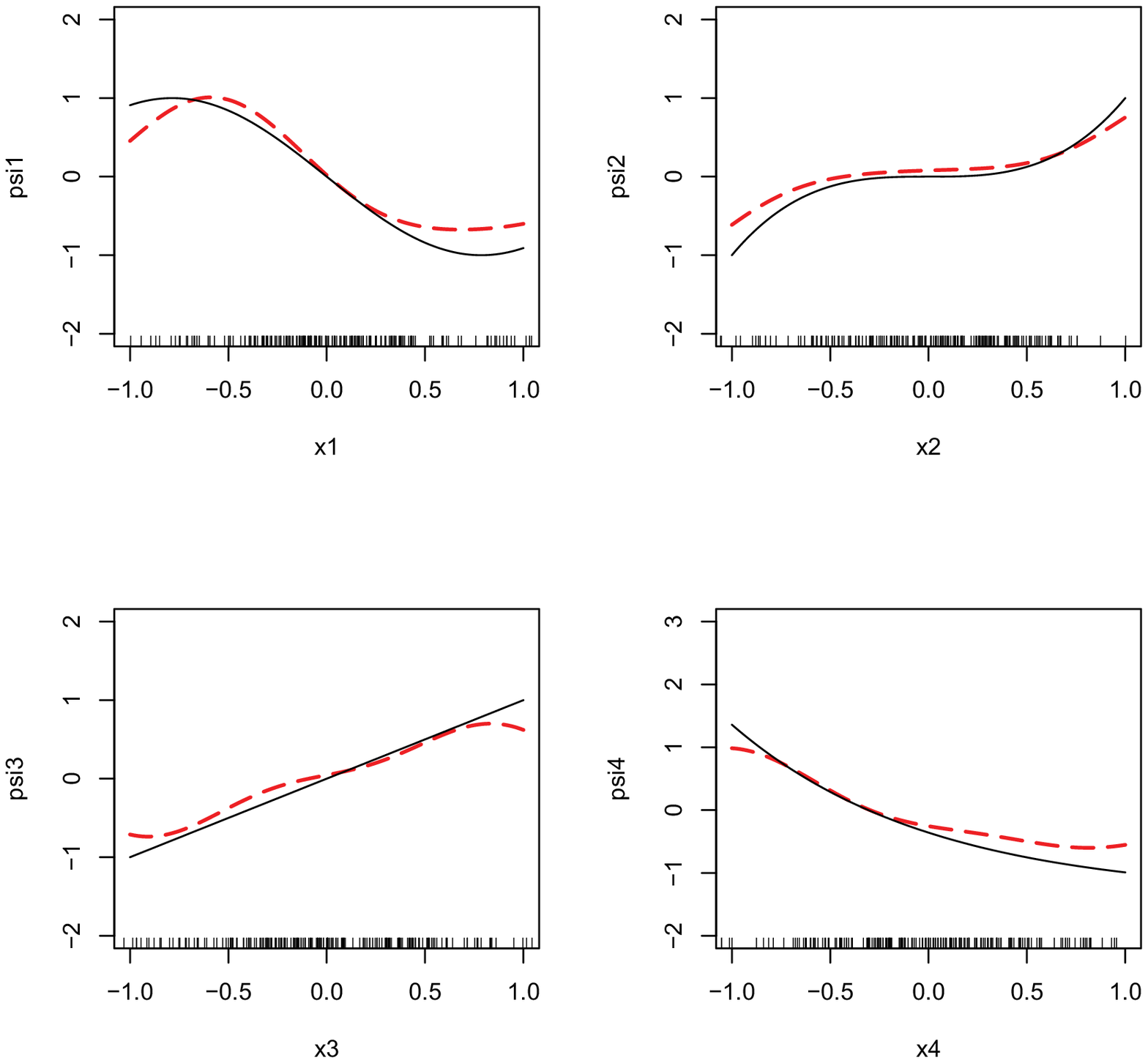}
\end{subfigure}
\begin{subfigure}[t]{.45\textwidth}
\caption{\autoref{mod3}, $p=50$.}
\includegraphics[width = \linewidth]{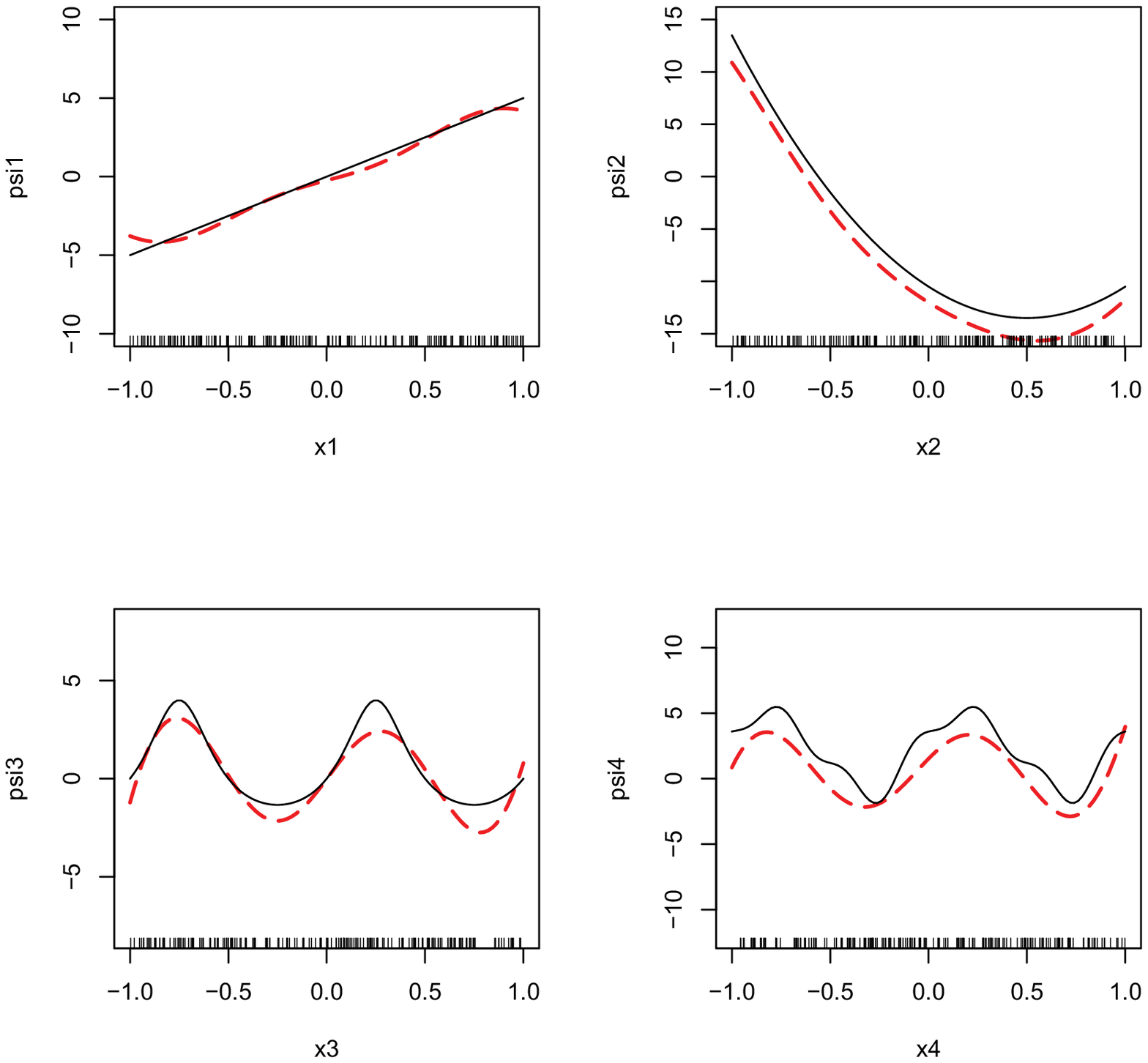}
\end{subfigure}
\caption{Estimates (red dashed lines) for $\psi^\star_1$, $\psi^\star_2$, $\psi^\star_3$ and $\psi^\star_4$ (solid black lines). Other estimates (for $\psi^\star_j$, $j\notin\{1,2,3,4\}$) are mostly zero.}
\label{fig1}
\vspace*{-12pt}
\end{figure}

\begin{figure}[t]
\begin{subfigure}[t]{.45\textwidth}
\caption{\autoref{mod1}, $p=200$.}
\center
\includegraphics[width = .6\linewidth]{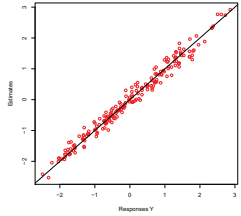}
\end{subfigure}
\begin{subfigure}[t]{.45\textwidth}
\caption{\autoref{mod1}, $p=400$.}
\center
\includegraphics[width = .6\linewidth]{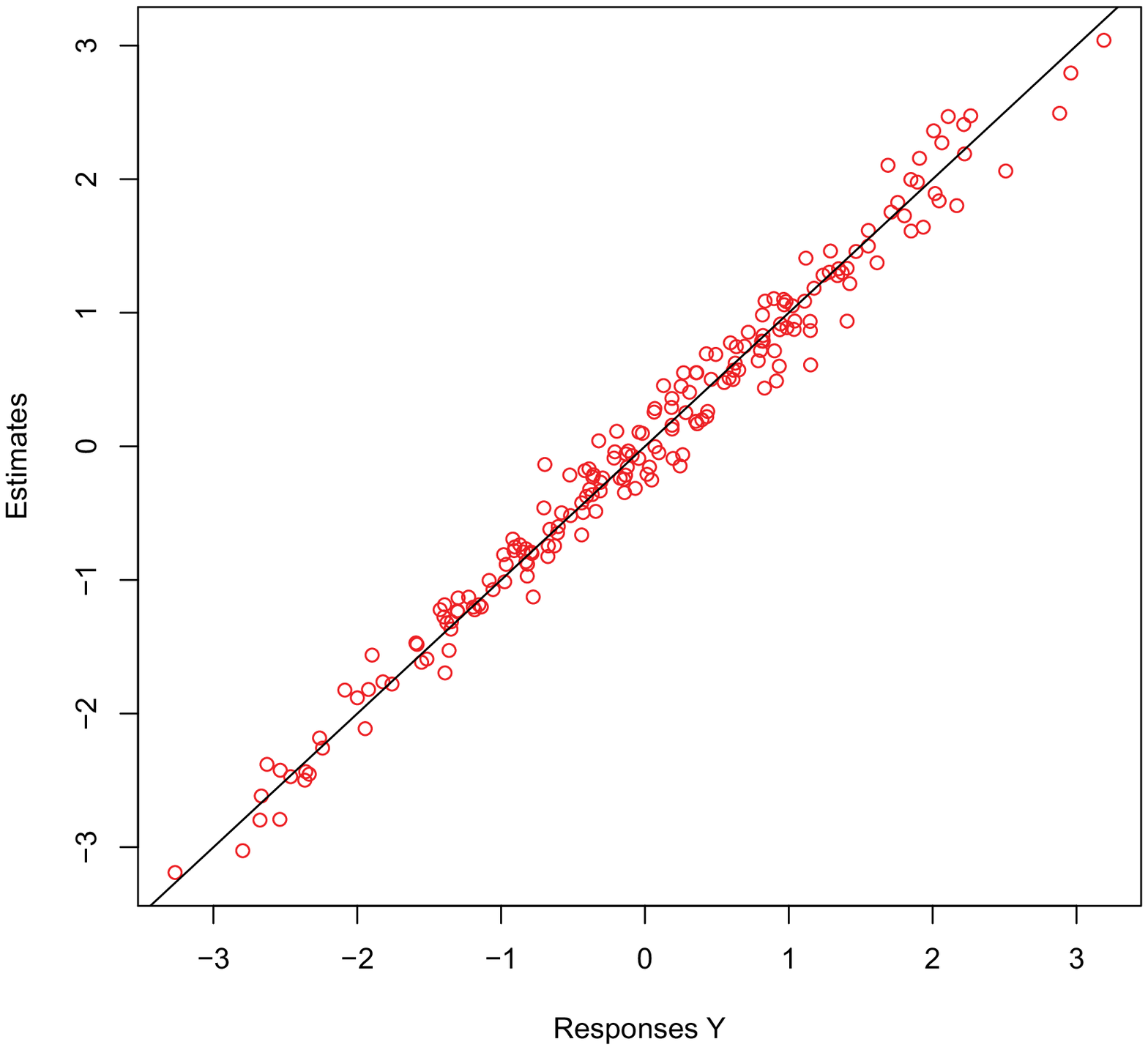}
\end{subfigure}
\begin{subfigure}[t]{.45\textwidth}
\caption{\autoref{mod2}, $p=50$.}
\center
\includegraphics[width = .6\linewidth]{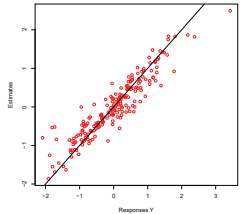}
\end{subfigure}
\begin{subfigure}[t]{.45\textwidth}
\caption{\autoref{mod3}, $p=50$.}
\center
\includegraphics[width = .6\linewidth]{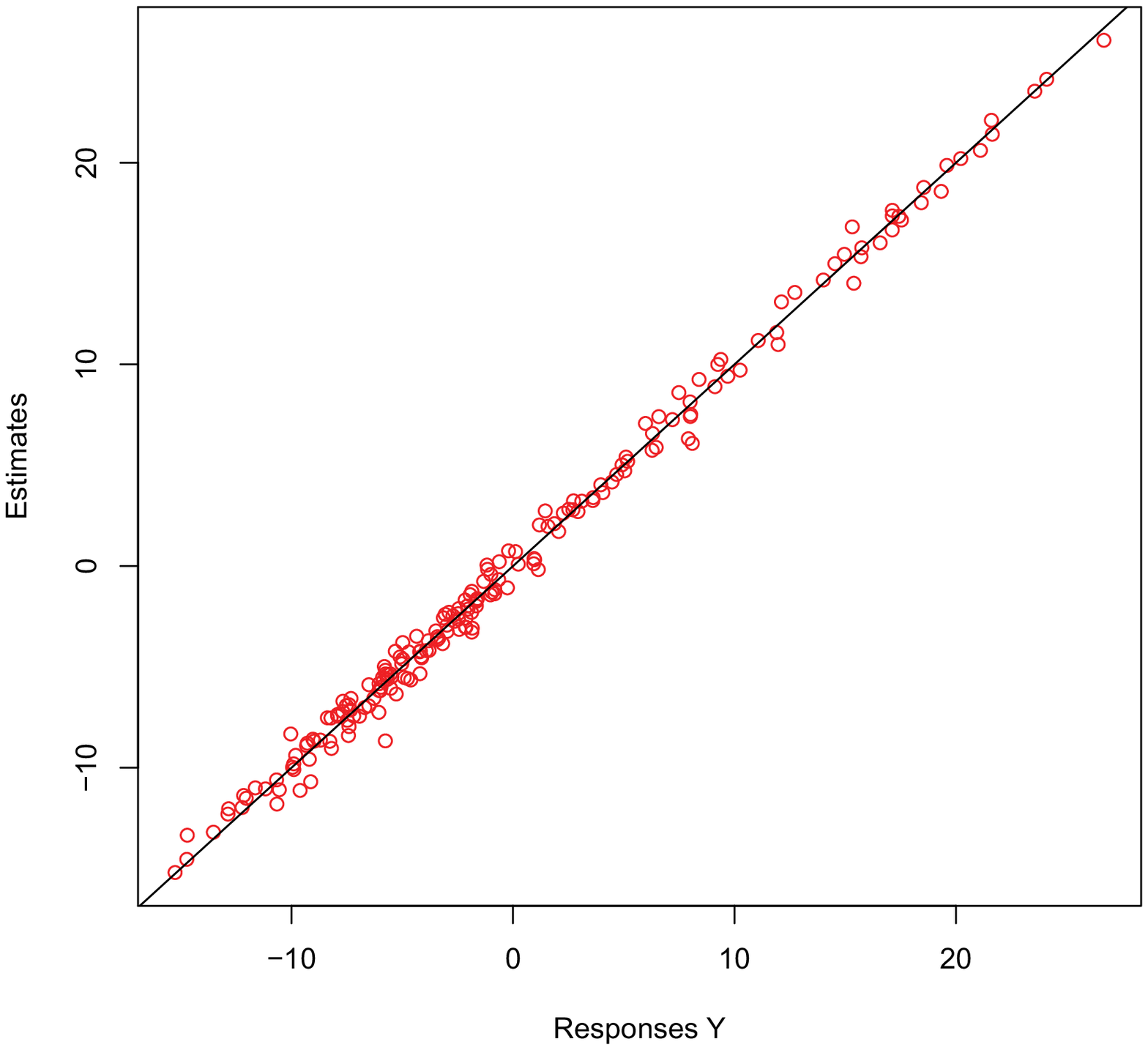}
\end{subfigure}
\caption{plot of the responses $Y_1,\dots,Y_n$ against their estimates. The more points on the first bisectrix (solid black line), the better the estimation.}
\label{fig2}
\end{figure}

The results of the simulations are summarized in \autoref{table:rss} and illustrated by \autoref{fig1} and \autoref{fig2}. The reconstruction of the true regression function $\psi^\star$ is achieved even in very high-dimensional situations, pulling up our method at the level of the gold standard Lasso.

\section{Proofs}\label{S:proof}

To start the chain of proofs leading to \autoref{T:additivemodels},
\autoref{T:sob} and \autoref{T:coro}, we recall and prove some lemmas to establish \autoref{T:regression} which consists in a general PAC-Bayesian inequality in the spirit of \citet[Theorem 5.5.1]{B:catoni2004} for classification or \citet[Lemma 5.8.2]{B:catoni2004} for regression. Note also that \citet[Theorem 1]{A:dt2012a} provides a similar inequality in the deterministic design case. A salient fact on \autoref{T:regression} is that the validity of the oracle inequalities only involves the distribution of the noise variable $\xi_1$, and that distribution is independent of the sample size $n$.

The proofs of the following two classical results are omitted. \autoref{L:massart} is a
version of Bernstein's inequality which originates in \citet[Proposition 2.19]{B:massart}, whereas
\autoref{L:catoni} appears in
\citet[Equation 5.2.1]{B:catoni2004}.

For $x\in\R$,
denote $(x)_+=\max(x,0)$. Let $\mu_1$, $\mu_2$ be two probabilities. The Kullback-Leibler divergence of $\mu_1$ with
respect to $\mu_2$ is denoted $\K(\mu_1,\mu_2)$ and is
\begin{equation*}
  \K(\mu_1,\mu_2) =
  \begin{cases}
    \int \log\left(\frac{\mathrm{d}\mu_1}{\mathrm{d}\mu_2}\right)\mathrm{d}\mu_1 &
    \mathrm{if\ } \mu_1 \ll \mu_2, \\
    \infty & \mathrm{otherwise.}
  \end{cases}
\end{equation*}
Finally, for any measurable space $(A,\mathcal{A})$ and any probability $\pi$ on $(A,\mathcal{A})$, denote by $\MP(A,\mathcal{A})$ the set of probabilities on $(A,\mathcal{A})$ absolutely continuous with respect to $\pi$.
\begin{lemma}\label{L:massart}
  Let $(T_i)_{i=1}^n$ be independent real-valued
  variables. Assume that there exist two positive constants
  $v$ and $w$ such that, for any integer $k \geq 2$,
  \begin{equation*}
    \sum_{i=1}^n\E[(T_i)_+^k] \leq \frac{k!}{2}vw^{k-2}.
  \end{equation*}
  Then for any $\gamma \in \left(0,\frac{1}{w}\right)$,
  \begin{equation*}
    \E\left[\exp\left(\gamma\sum_{i=1}^n(T_i-\E T_i)\right)\right] \leq \exp\left(\frac{v\gamma^2}{2(1-w\gamma)}\right).
  \end{equation*}
\end{lemma}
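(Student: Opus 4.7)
The result is a classical Bernstein-type moment-generating-function bound (Proposition 2.19 of Massart's monograph, as the paper notes), so the plan is to reproduce the standard Cram\'er--Chernoff argument adapted to the one-sided moment hypothesis.

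I would proceed in four mechanical steps. First, by independence of $T_1, \dots, T_n$, the joint MGF factors, so that after taking logarithms the target reduces to bounding $\sum_{i=1}^n \log \E[\exp(\gamma(T_i - \E T_i))]$. Second, apply $\log(1+u) \leq u$ to each factor. Third, Taylor-expand each exponential and observe that the centering $\E[T_i - \E T_i] = 0$ kills the linear term, yielding
\[
\log \E\!\left[\exp\!\left(\gamma \sum_{i=1}^n (T_i - \E T_i)\right)\right] \;\leq\; \sum_{k \geq 2} \frac{\gamma^k}{k!} \sum_{i=1}^n \E[(T_i - \E T_i)^k]
\]
after interchanging the order of summation. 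Fourth, invoke the hypothesis $\sum_i \E[(T_i)_+^k] \leq \frac{k!}{2} v w^{k-2}$ to control each $k$-th term by $\frac{v}{2}\gamma^2 (\gamma w)^{k-2}$, then sum the resulting geometric series on $\gamma \in (0, 1/w)$ to obtain $v\gamma^2/[2(1 - w\gamma)]$. Exponentiating both sides delivers the claim.

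The main obstacle is the fourth step, specifically transferring control from the positive-part raw moments $\E[(T_i)_+^k]$ (which the hypothesis bounds) to the centered moments $\E[(T_i - \E T_i)^k]$ (which arise from the Taylor expansion). This comparison is delicate because centering can inflate higher moments and because the hypothesis only constrains the positive tail of each $T_i$. A standard workaround is to avoid centering at the level of the expansion entirely: write $\E[\exp(\gamma(T_i - \E T_i))] = e^{-\gamma \E T_i}\, \E[\exp(\gamma T_i)]$, bound $\E[\exp(\gamma T_i)]$ directly by expanding the exponential and using the pointwise inequality $e^{\gamma t} - 1 - \gamma t \leq \sum_{k \geq 2} \gamma^k (t)_+^k/k! + \gamma^2 t^2/2 \cdot \mathbbm{1}[t \leq 0]$ after a sign split, and reabsorb the leftover $-\gamma \E T_i$ factor through a convexity argument so that the hypothesis on $(T_i)_+^k$ can be applied verbatim. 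All other manipulations (factoring via independence, applying $\log(1+u) \leq u$, interchanging sums, summing a geometric series) are template steps of the Cram\'er--Chernoff method and are essentially automatic once the moment-comparison step is in place.
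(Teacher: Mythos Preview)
The paper does not prove this lemma; it explicitly omits the proof and cites Massart's monograph (Proposition~2.19) as the source, so there is no paper proof to compare against.

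On the merits of your sketch: steps one through three are standard Cram\'er--Chernoff manipulations and are fine. The gap is exactly where you suspect, in step four. The pointwise inequality you write,
\[
e^{\gamma t}-1-\gamma t \;\le\; \sum_{k\ge 2}\frac{\gamma^k(t)_+^k}{k!}\;+\;\frac{\gamma^2 t^2}{2}\,\1[t\le 0],
\]
is correct, but after taking expectations the term $\tfrac{\gamma^2}{2}\E[T_i^2\1\{T_i\le 0\}]$ is not controlled by the hypothesis, which bounds only $\E[(T_i)_+^k]$. Your phrase ``reabsorb \ldots\ through a convexity argument'' does not supply the missing ingredient: no convexity will produce information about the negative tail from the positive tail alone. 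In fact the lemma as literally stated is false. Take $n=1$ and $T_1\in\{-a,0\}$ each with probability $1/2$; then $(T_1)_+\equiv 0$, so the hypothesis holds for every $v,w>0$, yet $\E[\exp(\gamma(T_1-\E T_1))]=\cosh(\gamma a/2)\to\infty$ as $a\to\infty$.

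The intended statement (and the one actually verified in the paper's applications, where the full bound $\sum_i\E T_i^2\le v$ is established in \eqref{eq:def-v}) replaces $(T_i)_+^2$ by $T_i^2$ at $k=2$, keeping $(T_i)_+^k$ only for $k\ge 3$. With that correction your argument goes through cleanly: from $\E[e^{\gamma T_i}]=1+\gamma\E T_i+\E[\phi(\gamma T_i)]\le\exp\bigl(\gamma\E T_i+\E[\phi(\gamma T_i)]\bigr)$, where $\phi(u)=e^u-1-u$, one gets $\log\E[e^{\gamma(T_i-\E T_i)}]\le\E[\phi(\gamma T_i)]$, and your sign split then bounds the right-hand side by $\tfrac{\gamma^2}{2}\E[T_i^2]+\sum_{k\ge 3}\tfrac{\gamma^k}{k!}\E[(T_i)_+^k]$, after which the geometric sum yields $v\gamma^2/[2(1-w\gamma)]$.
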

\begin{lemma}\label{L:catoni}
  Let $(A,\mathcal{A})$ be a measurable space. For any probability $\mu$ on $(A,\mathcal{A})$ and any measurable function
  $h : A \to \R$ such that $\int(\exp\circ\, h) \rm{d}\mu < \infty$,
  \begin{equation*}
    \log\int(\exp\circ\, h) \mathrm{d}\mu = \underset{m\in\MP(A,\mathcal{A})}{\sup}
    \int h \mathrm{d}m - \K(m,\mu),
  \end{equation*}
  with the convention
  $\infty-\infty =-\infty$. Moreover, as soon as $h$ is upper-bounded on the
  support of $\mu$, the supremum with respect to $m$ on the right-hand
  side is reached for the Gibbs distribution $g$ given by
  \begin{equation*}
    \frac{\mathrm{d}g}{\mathrm{d}\mu}(a) =
    \frac{\exp(h(a))}{\int(\exp\circ\, h)\mathrm{d}\mu}, \quad a\in A.
  \end{equation*}
\end{lemma}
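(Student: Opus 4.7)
The plan is to reduce the variational identity to the non-negativity of a single Kullback--Leibler divergence, namely $\K(m,g)$ where $g$ is the Gibbs distribution constructed in the statement. First I would dispose of the case when $m$ is not absolutely continuous with respect to $\mu$: then $\K(m,\mu)=\infty$, and by the convention $\infty-\infty=-\infty$ the functional $m\mapsto\int h\,\mathrm{d}m-\K(m,\mu)$ takes the value $-\infty$, so that such $m$ contribute nothing to the supremum on the right-hand side and the inequality is vacuous. Hence we may restrict attention to $m\in\MP(A,\mathcal{A})$.

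Next, set $Z=\int(\exp\circ\,h)\,\mathrm{d}\mu$, which is finite by assumption, and define the Gibbs probability $g$ by $\mathrm{d}g/\mathrm{d}\mu=\exp(h)/Z$. Because $h$ is upper-bounded on $\mathrm{supp}(\mu)$, $g$ and $\mu$ are mutually absolutely continuous on that support, so $m\ll\mu$ entails $m\ll g$. The key calculation is then the chain-rule identity
\begin{equation*}
\K(m,g)=\int\log\!\frac{\mathrm{d}m}{\mathrm{d}g}\,\mathrm{d}m=\int\log\!\frac{\mathrm{d}m}{\mathrm{d}\mu}\,\mathrm{d}m-\int\log\!\frac{\mathrm{d}g}{\mathrm{d}\mu}\,\mathrm{d}m=\K(m,\mu)-\int h\,\mathrm{d}m+\log Z,
\end{equation*}
which I would verify by substituting the explicit form $\log(\mathrm{d}g/\mathrm{d}\mu)=h-\log Z$.

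Rearranging the preceding identity yields $\int h\,\mathrm{d}m-\K(m,\mu)=\log Z-\K(m,g)$. Since $\K(m,g)\ge 0$ with equality if and only if $m=g$, we obtain the inequality $\int h\,\mathrm{d}m-\K(m,\mu)\le\log Z$ uniformly over $m\in\MP(A,\mathcal{A})$, together with attainment at the Gibbs measure $g$. Combining the two cases gives the variational formula together with the identification of the maximizer.

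The proof is essentially a one-line computation once the Gibbs distribution is written down, so there is no substantial obstacle; the only delicate point is bookkeeping the conventions when $m$ fails to be absolutely continuous with respect to $\mu$ or when $\int h\,\mathrm{d}m$ is not defined, which is why the assumption that $h$ be upper-bounded on $\mathrm{supp}(\mu)$ is needed to ensure that the integral $\int h\,\mathrm{d}m$ makes sense (possibly equal to $-\infty$) and that $Z<\infty$.
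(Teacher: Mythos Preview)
Your argument is the standard Donsker--Varadhan computation and is correct. Note, however, that the paper does not actually prove this lemma: it states explicitly that ``the proofs of the following two classical results are omitted'' and refers to \citet[Equation 5.2.1]{B:catoni2004}. So there is no proof in the paper to compare against; your write-up would serve as a self-contained replacement for that citation.

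Two minor remarks. First, in the paper's notation the supremum is already taken over $\MP(A,\mathcal{A})$, which by definition consists only of probabilities absolutely continuous with respect to the reference measure; hence your opening paragraph disposing of the non-absolutely-continuous case is superfluous (though harmless). Second, your explanation of where the upper-boundedness hypothesis enters is slightly off: mutual absolute continuity of $g$ and $\mu$ already follows from $h$ being real-valued (so $\exp(h)>0$ $\mu$-a.e.) together with $Z<\infty$. The role of the upper bound is rather to guarantee that $\int h\,\mathrm{d}g$ is well defined, so that the chain-rule identity $\K(g,\mu)=\int h\,\mathrm{d}g-\log Z$ is meaningful and one can legitimately evaluate the functional at $m=g$ to exhibit attainment.
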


\autoref{T:regression} is valid in the general regression framework. In the proofs of \autoref{L:randomized1}, \autoref{L:randomized2}, \autoref{L:aggregated1} and \autoref{T:regression}, we consider a general regression function $\psi^\star$. Denote by $(\Theta,\mathcal{T})$ a space of functions equipped with a countable generated $\sigma$-algebra, and let $\pi$ be a probability on $(\Theta,\mathcal{T})$, referred to as the prior. \autoref{L:randomized1}, \autoref{L:randomized2}, \autoref{L:aggregated1} and \autoref{T:regression} follow from the work of \citet{B:catoni2004,A:dt2008,A:dt2012a,A:alquier,A:ab}.
Let $\delta>0$ and consider the so-called \emph{posterior} Gibbs
transition density $\rho_\delta$ with respect to $\pi$, defined as
\begin{equation}\label{eq:gibbs-posterior}
  \rho_\delta(\{\x_i,y_i\}_{i=1}^n,\psi) = \frac{\exp[-\delta r_n(\{\x_i,y_i\}_{i=1}^n,\psi)]}{\int \exp[-\delta r_n(\{\x_i,y_i\}_{i=1}^n,\psi)]\pi(\mathrm{d}\psi)}.
\end{equation}
In the following three lemmas, denote by $\rho$ a \emph{so-called} posterior probability absolutely continuous with respect to $\pi$. Let $\psi$ be a realization of a random variable $\Psi$ sampled from $\rho$.
\begin{lemma}\label{L:randomized1}
  Let \autoref{As:hyp1} and \autoref{As:hyp2} hold. Set $w=8C\max(L,C)$, $\delta \in (0,n/[w+4(\sigma^2+C^2)])$ and $\varepsilon\in(0,1)$. Then with $\Proba$-probability at least $1-\varepsilon$
  \begin{equation*}
    R(\psi) - R(\psi^\star)
    \leq \frac{1}
    {1 - \frac{4\delta(\sigma^2+C^2)}{n-w\delta}}
    \left(R_n(\psi) - R_n(\psi^\star) + \frac{\log\frac{\mathrm{d}\rho}{\mathrm{d}\pi}(\psi) + \log\frac{1}{\varepsilon}}{\delta}\right).
  \end{equation*}
\end{lemma}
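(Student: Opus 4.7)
The approach is the standard PAC-Bayesian ``change-of-measure'' pipeline: (i) derive an exponential-moment bound for fixed $\psi$ using \autoref{L:massart}, (ii) integrate against the prior $\pi$ and swap expectations via Fubini, (iii) rewrite the prior integral as a $\rho$-expectation via the Radon--Nikodym derivative, (iv) apply Markov's inequality on the joint data--$\rho$ probability space.

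Concretely, at fixed $\psi$ I would set $T_i = (Y_i - \psi^\star(\X_i))^2 - (Y_i - \psi(\X_i))^2$. Expanding with $Y_i = \psi^\star(\X_i) + \xi_i$ gives $T_i = -2\xi_i h(\X_i) - h^2(\X_i)$ where $h = \psi^\star - \psi$, so that $-n^{-1}\sum_i T_i = R_n(\psi) - R_n(\psi^\star)$ and, by \autoref{As:hyp1}, $-\E T_1 = \E h^2(\X_1) = R(\psi) - R(\psi^\star)$. The technical core is to control $\sum_i \E[(T_i)_+^k]$ in the Bernstein form. Since $(T_i)_+ \leq 2|\xi_i||h(\X_i)|$ (the $-h^2$ contribution is non-positive), the factorisation $|h|^k = |h|^{k-2} h^2 \leq (2C)^{k-2} h^2$ allowed by \autoref{As:hyp2} combined with the sub-gamma moment bound of \autoref{As:hyp1} produces, after careful accounting, an estimate $\sum_i \E[(T_i)_+^k] \leq (k!/2) v w^{k-2}$ with $v$ \emph{linear} in the excess risk $R(\psi) - R(\psi^\star)$ and $w = 8C\max(L,C)$. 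Applying \autoref{L:massart} with $\gamma = \delta/n$ (admissible because $\delta < n/[w+4(\sigma^2+C^2)]$ implies $\gamma < 1/w$) and rearranging so as to move the excess-risk factor from the Bernstein bound onto the left-hand side yields
\begin{equation*}
\E \exp\bigl(\delta(1-c_\delta)[R(\psi) - R(\psi^\star)] - \delta[R_n(\psi) - R_n(\psi^\star)]\bigr) \leq 1,
\end{equation*}
with $c_\delta = 4\delta(\sigma^2+C^2)/(n-w\delta) \in (0,1)$ under the stated constraint on $\delta$.

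I would then integrate the preceding display against $\pi(\mathrm{d}\psi)$ (Fubini applies as the integrand is nonnegative) and rewrite $\pi = (\mathrm{d}\pi/\mathrm{d}\rho)\rho$ to obtain
\begin{equation*}
\E \int \exp\bigl(\delta(1-c_\delta)[R(\psi) - R(\psi^\star)] - \delta[R_n(\psi) - R_n(\psi^\star)] - \log\tfrac{\mathrm{d}\rho}{\mathrm{d}\pi}(\psi)\bigr) \rho(\mathrm{d}\psi) \leq 1.
\end{equation*}
Markov's inequality on the joint data--$\rho$ probability space gives, with $\Proba$-probability at least $1-\varepsilon$, that the argument of the exponential is at most $\log(1/\varepsilon)$; dividing through by $\delta(1-c_\delta) > 0$ recovers the claimed inequality. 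Note that \autoref{L:catoni} is not needed here because we seek a pointwise bound at a random $\psi \sim \rho$ rather than a $\rho$-integrated bound.

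The main obstacle is the Bernstein moment bound: one must show $v$ is \emph{linear} in $R(\psi) - R(\psi^\star)$ rather than merely bounded by $O(C^2)$, since this linearity is precisely what allows absorption into the left-hand side and produces the characteristic $1/(1-c_\delta)$ shape of the conclusion. Matching the exact numerical constants $4(\sigma^2 + C^2)$ and $8C\max(L,C)$ requires carefully balancing the noise contribution ($\sigma^2$, via \autoref{As:hyp1}) and the bounded-regression contribution ($C^2$, via \autoref{As:hyp2}) in the $k$-th moment estimates of $T_i$; everything downstream is routine.
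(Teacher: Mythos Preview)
Your proposal is correct and follows essentially the same architecture as the paper: Bernstein via \autoref{L:massart} applied to $T_i=(Y_i-\psi^\star(\X_i))^2-(Y_i-\psi(\X_i))^2$ with $v$ linear in the excess risk, integration against $\pi$, Fubini, change of measure to $\rho$ via the Radon--Nikodym derivative, then Markov's inequality in the form $\exp(\delta x)\geq\1_{\R_+}(x)$.

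One minor remark on the moment step: your positivity observation $(T_i)_+\leq 2|\xi_i||h(\X_i)|$ is in fact \emph{sharper} than what the paper does. The paper bounds $|T_i|^k\leq |2\xi_i+h|^k|h|^k$ and then expands $|2\xi_i+h|^k\leq 2^{k-1}(2^k|\xi_i|^k+|h|^k)$, which is why both a $\sigma^2$ term (from $|\xi_i|^k$) and a $C^2$ term (from $|h|^k$) survive in $v=8n(\sigma^2+C^2)[R(\psi)-R(\psi^\star)]$. With your trick the $h^2$ piece is discarded before taking powers, so the $C^2$ contribution to $v$ never appears and you would obtain $v\propto n\sigma^2[R(\psi)-R(\psi^\star)]$ and $w=4CL$ rather than $8C\max(L,C)$. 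Hence there is no ``careful balancing of noise and bounded-regression contributions'' to perform in your route; your constants are strictly smaller and the stated lemma follows \emph{a fortiori}.
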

\begin{proof}
  Apply \autoref{L:massart} to the variables $T_i$ defined as
  follow: for any $\psi \in $,
  \begin{equation}\label{eq:T_i}
    T_i = -(Y_i - \psi(\X_i))^2 + (Y_i - \psi^\star(\X_i))^2, \quad i\in\{1,\dots,n\}.
  \end{equation}
  First, let us note that
  \begin{align*}
    R(\psi)-R(\psi^\star) &= \E[(Y_1-\psi(\X_1))^2] - \E[(Y_1-\psi^\star(\X_1))^2] \\
    &=\E[(2Y_1-\psi(\X_1)-\psi^\star(\X_1))(\psi^\star(\X_1)-\psi(\X_1))] \\
    &=\E\left[(\psi^\star(\X_1)-\psi(\X_1))\E[(2W_1+\psi^\star(\X_1)-\psi(\X_1))|\X_1]\right] \\
    &=2\E[(\psi^\star(\X_1)-\psi(\X_1))\E[\xi_1|\X_1]]+\E[\psi^\star(\X_1)-\psi(\X_1)]^2.
  \end{align*}
  As $\E[\xi_1|\X_1]=0$,
  \begin{equation}\label{eq:pythagore}
    R(\psi)-R(\psi^\star) = \E[\psi^\star(\X)-\psi(\X)]^2.
  \end{equation}
  By \eqref{eq:T_i}, the random variables $(T_i)_{i=1}^n$ are independent. Using \autoref{L:massart}, we get
  \begin{align}
    \nonumber \sum_{i=1}^n\E T_i^2 &=
    % \sum_{i=1}^n\E\E\left[T_i^2|\X_i\right] \\
    \sum_{i=1}^n\E\left[(2Y_i-\psi(\X_i)-\psi^\star(\X_i))^2
      (\psi(\X_i)-\psi^\star(\X_i))^2\right]  \\ \nonumber
    &= \sum_{i=1}^n\E\E\left[(2W_i+\psi^\star(\X_i)-\psi(\X_i))^2
      (\psi(\X_i)-\psi^\star(\X_i))^2|\X_i\right].
      \end{align}
      Next, using that $|a+b|^k\leq 2^{k-1}(|a|+|b|)$ for any $a$, $b\in\R$ and $k\in\N^*$, we get
      \begin{align}
         \nonumber \sum_{i=1}^n\E T_i^2 &\leq 2\sum_{i=1}^n\E\left[(\psi(\X_i)-\psi^\star(\X_i))^2\E\left[(4W_i^2+4C^2)
        |\X_i\right]\right] \\ \nonumber
    &\leq 8\left(\sigma^2 + C^2\right)\sum_{i=1}^n\E\left[(\psi(\X_i)-\psi^\star(\X_i))^2\right] \\ \label{eq:def-v}
    &=8n\left(\sigma^2+C^2\right)\left(R(\psi)-R(\psi^\star)\right) \defin v,
  \end{align}
  where we have used \eqref{eq:pythagore} in the last equation. It
  follows that for any integer $k\geq 3$,
  \begin{multline*}
    \sum_{i=1}^n\E[(T_i)_+^k] = \sum_{i=1}^n\E\E[(T_i)_+^k|\X_i]
    \\
    \leq
    \sum_{i=1}^n\E\E\left[
      |2Y_i - \psi(\X_i) - \psi^\star(\X_i)|^k
      |\psi(\X_i) - \psi^\star(\X_i)|^k |\X_i\right] \\
    = \sum_{i=1}^n\E\E\left[ |2W_i + \psi^\star(\X_i) - \psi(\X_i)|^k
      |\psi(\X_i) - \psi^\star(\X_i)|^k  |\X_i\right] \\
    \leq 2^{k-1}\sum_{i=1}^n\E\E\left[\left(2^k|\xi_i|^k+|\psi^\star(\X_i) - \psi(\X_i)|^k\right)
      |\psi(\X_i) - \psi^\star(\X_i)|^k |\X_i \right].
  \end{multline*}
  Using that $|\psi(\x_i) - \psi^\star(\x_i)|^k \leq (2C)^{k-2}|\psi(\x_i) - \psi^\star(\x_i)|^2$ and \eqref{eq:pythagore}, we get
  \begin{multline*}
   \sum_{i=1}^n\E[(T_i)_+^k] \leq 2^{k-1}\sum_{i=1}^n
    \left(2^{k-1}k!\sigma^2L^{k-2}+(2C)^k\right)(2C)^{k-2}[R(\psi)-R(\psi^\star)] \\
     =
    \frac{k!}{2}v(2C)^{k-2}\left(\frac{2^{2k-4}\sigma^2L^{k-2}+\frac{2}{k!}2^{2k-4}C^k}{\sigma^2+C^2}\right).
  \end{multline*}
  Recalling that $C>\max(1,\sigma)$ gives
  \begin{align*}
   \frac{2^{2k-4}\sigma^2L^{k-2}+\frac{2}{k!}2^{2k-4}C^k}{\sigma^2+C^2} &\leq \frac{4^{k-2}\sigma^2L^{k-2}}{2\sigma^2}+\frac{\frac{2}{k!}4^{k-2}C^k}{C^2} \\
    &\leq \frac{1}{2}(4L)^{k-2}+\frac{1}{2}(4C)^{k-2} = [4\max(L,C)]^{k-2}.
  \end{align*}
  Hence
  \begin{equation}\label{eq:def-w}
  \sum_{i=1}^n\E[(T_i)_+^k]\leq \frac{k!}{2}vw^{k-2},\quad \mathrm{with}\quad  w\defin 8C\max(L,C).
  \end{equation}
  Applying \autoref{L:massart}, we obtain, for any real $\delta \in \left(0,\frac{n}{w}\right)$, with $\gamma = \frac{\delta}{n}$,
  \begin{equation*}
    \E\exp[\delta(R_n(\psi^\star) - R_n(\psi) + R(\psi) - R(\psi^\star))]
    \leq \exp\left(\frac{v\delta^2}{2n^2\left(1-\frac{w\delta}{n}\right)}\right),
  \end{equation*}
  % )}]
  that is, that for any real number $\varepsilon \in (0,1)$,
  \begin{multline}\label{eq:res}
    \E\exp\left[
      \delta[R_n(\psi^\star) - R_n(\psi)]+
      \delta[R(\psi) - R(\psi^\star)]\left(1 -
        \frac{4\delta(\sigma^2+C^2)}{n-w\delta} \right) \right. \\ \left.
      - \log\frac{1}{\varepsilon}\right]
    \leq
    \varepsilon.
  \end{multline}
  Next, we use a standard PAC-Bayesian approach (as developed in
  \citet{A:audibert2004,B:catoni2004,B:catoni2007,A:alquier}). For any prior probability $\pi$ on $(\Theta,\mathcal{T})$,
  \begin{multline*}
    \int\E\exp\left[
      \delta[R(\psi) - R(\psi^\star)]\left(1 -
        \frac{4\delta(\sigma^2+C^2)}{n-w\delta} \right)\right. \\ \left. + \delta[R_n(\psi^\star) - R_n(\psi)] - \log\frac{1}{\varepsilon}\right]\pi(\mathrm{d}\psi)
    \leq
    \varepsilon.
  \end{multline*}
  By the Fubini-Tonelli theorem
  \begin{multline*}
    \E\int\exp\left[
      \delta[R(\psi) - R(\psi^\star)]\left(1 -
        \frac{4\delta(\sigma^2+C^2)}{n-w\delta} \right)\right. \\ \left.+ \delta[R_n(\psi^\star) - R_n(\psi)] - \log\frac{1}{\varepsilon}\right]\pi(\mathrm{d}\psi)
    \leq
    \varepsilon.
  \end{multline*}
  Therefore, for any data-dependent posterior probability measure
  $\rho$ absolutely continuous with respect to $\pi$, adopting
  the convention $\infty\times 0 = 0$,
  \begin{multline}\label{eq:point-divergence}
    \E\int\exp\left[
      \delta[R(\psi) - R(\psi^\star)]\left(1 -
        \frac{4\delta(\sigma^2+C^2)}{n-w\delta}
      \right)\right. \\
    \left. + \delta[R_n(\psi^\star) - R_n(\psi)] - \log\frac{\mathrm{d}\rho}{\mathrm{d}\pi}(\psi)  - \log\frac{1}{\varepsilon}\right]\rho(\mathrm{d}\psi)
    \leq
    \varepsilon.
  \end{multline}
  Recalling that $\E$ stands for the expectation computed with respect to $\Proba$, the integration symbol may be omitted and we get
  \begin{multline*}
    \E\exp\left[
      \delta[R(\psi) - R(\psi^\star)]\left(1 -
        \frac{4\delta(\sigma^2+C^2)}{n-w\delta}
      \right)\right. \\
    \left. + \delta[R_n(\psi^\star) - R_n(\psi)] - \log\frac{\mathrm{d}\rho}{\mathrm{d}\pi}(\psi) - \log\frac{1}{\varepsilon}\right]
    \leq
    \varepsilon.
  \end{multline*}
  Using the elementary inequality $\exp(\delta x) \geq
  \1_{\R_+}(x)$, we get, with $\Proba$-probability at most $\varepsilon$
  \begin{multline*}
      \left(1 - \frac{4\delta(\sigma^2+C^2)}{n-w\delta} \right)[R(\psi) - R(\psi^\star)]
    \geq R_n(\psi) - R_n(\psi^\star) \\ + \frac{\log\frac{\mathrm{d}\rho}{\mathrm{d}\pi}(\psi)
    + \log\frac{1}{\varepsilon}}{\delta}.
  \end{multline*}
  Taking $\delta < {n}/[w+4(\sigma^2+C^2)]$ implies
  \begin{equation*}
    1 - \frac{4\delta(\sigma^2+C^2)}{n-w\delta} > 0,
  \end{equation*}
  and with $\Proba$-probability at least $1-\varepsilon$,
  \begin{equation*}
    R(\psi) - R(\psi^\star)
    \leq \frac{1}
    {1 - \frac{4\delta(\sigma^2+C^2)}{n-w\delta}}
    \left(R_n(\psi) - R_n(\psi^\star) + \frac{\log\frac{\mathrm{d}\rho}{\mathrm{d}\pi}(\psi) + \log\frac{1}{\varepsilon}}{\delta}\right).
  \end{equation*}
\end{proof}
\begin{lemma}\label{L:randomized2}
 Let \autoref{As:hyp1} and \autoref{As:hyp2} hold. Set $w=8C\max(L,C)$, $\delta \in (0,n/[w+4(\sigma^2+C^2)])$ and $\varepsilon\in(0,1)$. Then with $\Proba$-probability at least $1-\varepsilon$
    \begin{multline}\label{eq:res-lemma}
      \int R_n(\psi)\rho(\mathrm{d}\psi)-R_n(\psi^\star) \leq
      \left[1 +\frac{4\delta(\sigma^2+C^2)}{n-w\delta}\right]
      \left[\int R(\psi)\rho(\mathrm{d}\psi) \right. \\ \left. - \vphantom{\int} R(\psi^\star)\right]
      + \frac{\K(\rho,\pi)+\log\frac{1}{\varepsilon}}{\delta}.
    \end{multline}
\end{lemma}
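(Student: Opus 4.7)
The plan is to mimic the proof of \autoref{L:randomized1}, but apply Bernstein's inequality to the \emph{opposite} variables in order to swap the roles of $R$ and $R_n$, and then pass to an $\rho$-integrated statement via the Donsker--Varadhan variational formula (\autoref{L:catoni}) instead of the pointwise Chernoff bound used there.

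First I would set $T_i' = -T_i = (Y_i - \psi(\X_i))^2 - (Y_i - \psi^\star(\X_i))^2$ for $i\in\{1,\dots,n\}$. Since $|T_i'|^k = |T_i|^k$, the same chain of inequalities that gave \eqref{eq:def-v} and \eqref{eq:def-w} in \autoref{L:randomized1} yields
\begin{equation*}
\sum_{i=1}^n \E[(T_i')_+^k] \leq \frac{k!}{2} v w^{k-2},
\quad v = 8n(\sigma^2+C^2)\bigl[R(\psi)-R(\psi^\star)\bigr],
\quad w = 8C\max(L,C),
\end{equation*}
so \autoref{L:massart} applies with $\gamma = \delta/n$, for $\delta \in (0,n/w)$. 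Since $\frac{1}{n}\sum_i T_i' = R_n(\psi)-R_n(\psi^\star)$ and $\frac{1}{n}\sum_i \E T_i' = R(\psi)-R(\psi^\star)$, this gives
\begin{equation*}
\E\exp\!\left[\delta\bigl(R_n(\psi)-R_n(\psi^\star)\bigr) - \delta\bigl(R(\psi)-R(\psi^\star)\bigr)\!\left(1 + \frac{4\delta(\sigma^2+C^2)}{n-w\delta}\right)\right] \leq 1,
\end{equation*}
after substituting the expression for $v$ and simplifying $\frac{v\delta^2/n^2}{2(1-w\delta/n)}$, exactly as in \eqref{eq:res} but with the sign of the risk difference reversed inside the exponential.

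Second, I would multiply by $\varepsilon$, integrate the inequality with respect to the prior $\pi$, and invoke Fubini--Tonelli to interchange the two expectations. Markov's inequality then yields that, with $\Proba$-probability at least $1-\varepsilon$,
\begin{equation*}
\log \int \exp\!\left[\delta\bigl(R_n(\psi)-R_n(\psi^\star)\bigr) - \delta\bigl(R(\psi)-R(\psi^\star)\bigr)\!\left(1 + \tfrac{4\delta(\sigma^2+C^2)}{n-w\delta}\right) - \log\tfrac{1}{\varepsilon}\right] \pi(\mathrm{d}\psi) \leq 0.
\end{equation*}

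Third, I would apply \autoref{L:catoni} to the integrand: for every $\rho \in \MP(\Theta,\mathcal{T})$ the variational formula gives $\int h\, \mathrm{d}\rho - \K(\rho,\pi) \leq \log \int \exp(h)\,\mathrm{d}\pi$. Plugging in $h$ as the bracketed function above and dividing through by $\delta$ gives exactly \eqref{eq:res-lemma}. This last step is what makes the proof of \autoref{L:randomized2} structurally different from that of \autoref{L:randomized1}, where the bound was obtained pointwise in $\psi$ through $e^{\delta x} \geq \1_{\R_+}(x)$; here we need a statement that holds uniformly in the posterior $\rho$, so the Donsker--Varadhan duality is the natural tool. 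I do not expect any serious obstacle: the Bernstein computation is identical (up to a sign) to the one already carried out, and the passage from the $\pi$-exponential-moment bound to the $\rho$-integrated inequality is a standard PAC-Bayesian maneuver.
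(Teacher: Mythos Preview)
Your proposal is correct and follows essentially the same route as the paper: apply Bernstein (\autoref{L:massart}) to the sign-flipped variables $Z_i=-T_i$, integrate against $\pi$, and pass to an $\rho$-integrated inequality. The only cosmetic difference is in the last step: the paper changes measure from $\pi$ to $\rho$ (introducing $\log\frac{\mathrm{d}\rho}{\mathrm{d}\pi}$), then pulls the $\rho$-integral inside the exponential by Jensen, and finally uses $e^{\delta x}\geq\1_{\R_+}(x)$; you instead apply Markov to the $\pi$-exponential moment first and then invoke the Donsker--Varadhan identity (\autoref{L:catoni}). These two orderings are the standard equivalent formulations of the same PAC-Bayesian step, and your version has the small bonus of yielding the bound simultaneously for all $\rho\in\MP(\Theta,\mathcal{T})$ on a single event.
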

\begin{proof}
  Set $\psi\in\F$ and
    $Z_i = (Y_i - \psi(\X_i))^2 - (Y_i - \psi^\star(\X_i))^2$, $i\in \{1,\dots,n\}$.
  Since $Z_i=-T_i$ where $T_i$ is defined in \eqref{eq:T_i}, using the same arguments that lead to  \eqref{eq:res},
  we get that for any $\delta\in (0,\in n/w)$ and $\varepsilon \in (0,1)$
%  \begin{multline*}
%    \E\exp\left[
%      \delta[R_n(\psi) - R_n(\psi^\star)] \right. \\ \left. -
%      \delta[R(\psi) - R(\psi^\star)]\left(1 +
%        \frac{4\delta(\sigma^2+C^2)}{n-w\delta} \right)
%      - \log\frac{1}{\varepsilon}\right]
%    \leq
%    \varepsilon.
%  \end{multline*}
%  The proof is along the same lines than \ref{L:randomized1}. Let $\pi$ denote some prior
%  probability on $(\Theta,\mathcal{T})$, and $\rho$ some data-dependent
%  posterior probability measure which is absolutely continuous with
%  respect to $\pi$. Previous inequality yields
  \begin{multline*}
    \E\int\exp\left[
      -\delta[R(\psi) - R(\psi^\star)]\left(1 +
        \frac{4\delta(\sigma^2+C^2)}{n-w\delta}
      \right)\right. \\
    \left. + \delta[R_n(\psi) - R_n(\psi^\star)] - \log\frac{\mathrm{d}\rho}{\mathrm{d}\pi}(\psi)  - \log\frac{1}{\varepsilon}\right]\rho(\mathrm{d}\psi)
    \leq
    \varepsilon.
  \end{multline*}
  Using Jensen's inequality, we get
  \begin{multline*}
    \E\exp\left[-\int \left\{
      \delta[R(\psi) - R(\psi^\star)]\left(1 +
        \frac{4\delta(\sigma^2+C^2)}{n-w\delta}
      \right)\right.\right. \\
    \left.\left.  + \delta[R_n(\psi) - R_n(\psi^\star)] - \log\frac{\mathrm{d}\rho}{\mathrm{d}\pi}(\psi)  - \log\frac{1}{\varepsilon}\right\}\rho(\mathrm{d}\psi)\right]
    \leq
    \varepsilon.
  \end{multline*}
  Since $\exp(\delta x) \geq \1_{\R_+}(x)$, we obtain with $\Proba$-probability at most $\varepsilon$
  \begin{multline*}
      \left[-\int R(\psi)\rho(\mathrm{d}\psi)+R(\psi^\star)\right]
      \left(1 +\frac{4\delta(\sigma^2+C^2)}{n-w\delta}\right)
      + \int R_n(\psi)\rho(\mathrm{d}\psi) \\ - R_n(\psi^\star)
      - \frac{\K(\rho,\pi)+\log\frac{1}{\varepsilon}}{\delta}
      \geq 0.
  \end{multline*}
  Taking $\delta<{n}/[w+4(\sigma^2+C^2)]$ yields \eqref{eq:res-lemma}.
\end{proof}
\begin{lemma}\label{L:aggregated1}
  Let \autoref{As:hyp1} and \autoref{As:hyp2} hold. Set $w=8C\max(L,C)$, $\delta \in (0,n/[w+4(\sigma^2+C^2)])$ and $\varepsilon\in(0,1)$. Then with $\Proba$-probability at least
  $1-\varepsilon$
  \begin{multline*}
    \int R(\psi)\rho(\mathrm{d}\psi) - R(\psi^\star)\leq \frac{1}
    {1 - \frac{4\delta(\sigma^2+C^2)}{n-w\delta}}\left(\int R_n(\psi)\rho(\mathrm{d}\psi)
      - R_n(\psi^\star) \right. \\ \left. +
      \frac{\K(\rho,\pi)+\log\frac{1}{\varepsilon}}{\delta} \right).
  \end{multline*}
\end{lemma}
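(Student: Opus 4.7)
The plan is to reuse the exponential-moment machinery already built up in the proof of \autoref{L:randomized1}; specifically, I would restart from inequality \eqref{eq:point-divergence}, which already involves an arbitrary posterior $\rho \ll \pi$ and the pointwise log-density $\log(\mathrm{d}\rho/\mathrm{d}\pi)(\psi)$. The difference between \autoref{L:randomized1} and the present lemma is purely a matter of when one passes from a statement that holds $\rho$-almost surely in $\psi$ to a statement about $\rho$-averages; everything probabilistic in the sample $\{(\X_i,Y_i)\}_{i=1}^n$ is unchanged.

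Starting from \eqref{eq:point-divergence}, I would apply Jensen's inequality to pull the $\rho$-integration inside the exponential, using convexity of $\exp$ in the form $\exp(\int f\,\mathrm{d}\rho) \le \int \exp(f)\,\mathrm{d}\rho$. This produces an expression of the shape
\[
\E\exp\!\left[ \delta\!\left(\!\int R(\psi)\rho(\mathrm{d}\psi) - R(\psi^\star)\!\right)\!\!\left(1 - \tfrac{4\delta(\sigma^2+C^2)}{n-w\delta}\right) + \delta\!\left(R_n(\psi^\star) - \!\int R_n(\psi)\rho(\mathrm{d}\psi)\!\right) - \K(\rho,\pi) - \log\tfrac{1}{\varepsilon}\right] \le \varepsilon,
\]
where I have used the definition $\int\log(\mathrm{d}\rho/\mathrm{d}\pi)\,\mathrm{d}\rho = \K(\rho,\pi)$. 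From here I would invoke the elementary bound $\exp(\delta x) \ge \mathbbm{1}_{\R_+}(x)$ (equivalently, Markov's inequality) to conclude that with $\Proba$-probability at least $1-\varepsilon$ the argument of the exponential is strictly negative. The condition $\delta < n/[w+4(\sigma^2+C^2)]$ guarantees that the coefficient $1 - 4\delta(\sigma^2+C^2)/(n-w\delta)$ is positive, so one can divide by it and rearrange to obtain the stated inequality.

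There is no genuine technical obstacle here beyond the one already overcome in \autoref{L:randomized1}: all the Bernstein-type variance and moment computations (the construction of $v$ and $w$ via \autoref{L:massart}) have been done. The only delicate point worth double-checking is the direction of Jensen's inequality and the fact that the randomness in $\E$ is over the sample while the integration is over $\psi\sim\rho$, so Fubini must be applied before Jensen to legitimately swap the two. Once that is handled, the two bookkeeping substitutions (the Jensen step averaging $R(\psi)-R(\psi^\star)$ against $\rho$, and the identification of the density term with $\K(\rho,\pi)$) deliver the result immediately.
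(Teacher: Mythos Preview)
Your proposal is correct and follows exactly the paper's own proof: start from \eqref{eq:point-divergence}, apply Jensen's inequality to bring the $\rho$-integration inside the exponential (turning $\log\frac{\mathrm{d}\rho}{\mathrm{d}\pi}(\psi)$ into $\K(\rho,\pi)$), and then use $\exp(\delta x)\ge \1_{\R_+}(x)$ together with the positivity of $1-\frac{4\delta(\sigma^2+C^2)}{n-w\delta}$ to rearrange. The paper's proof is precisely these three lines, so there is nothing to add.
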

\begin{proof}
  Recall
  \eqref{eq:point-divergence}. By Jensen's
  inequality,
    \begin{multline*}
      \E\exp\left[
        \delta\left(\int R(\psi)\rho(\mathrm{d}\psi) - R(\psi^\star)\right)\left[1 -
          \frac{4\delta(\sigma^2+C^2)}{n-w\delta}
        \right]\right. \\
      \left. + \delta\left(R_n(\psi^\star) -
          \int R_n(\psi)\rho(\mathrm{d}\psi)\right) -
        \K(\rho,\pi)- \log\frac{1}{\varepsilon}\right]
      \leq
      \varepsilon.
    \end{multline*}
  Using $\exp(\delta x) \geq
  \1_{\R_+}(x)$ yields the expected result.
\end{proof}
\begin{theo}\label{T:regression}
  Let $\hat{\psi}$ and
  $\hat{\psi}^{\agg}$ be realizations of the Gibbs
  estimators defined by \eqref{eq:randomized}--\,\eqref{eq:aggregated}, respectively. Let \autoref{As:hyp1} and \autoref{As:hyp2} hold. Set $w=8C\max(L,C)$ and
$\delta = n\ell/[w+4(\sigma^2+C^2)]$,
for $\ell\in(0,1)$, and let $\varepsilon\in(0,1)$. Then with probability at least $1-2\varepsilon$,
  \begin{multline}
  \left. \begin{array}{l}
R(\hat{\psi})- R(\psi^\star)
\\ R(\hat{\psi}^{\agg})- R(\psi^\star)
\end{array} \right\}
\leq \D
    \, \underset{\rho\in\MP(\Theta,\mathcal{T})}{\inf}
    \left\{\int R(\psi)\rho(\mathrm{d}\psi)\right. \\ \left. - R(\psi^\star)
    + \frac{\K(\rho,\pi) + \log\frac{1}{\varepsilon}}{n}
    \right\},
  \end{multline}
  where $\D$ is a constant depending only upon $w$, $\sigma$, $C$ and $\ell$.
\end{theo}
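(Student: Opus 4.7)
The two estimators give rise to the same right-hand side, so the strategy is to combine the three preliminary lemmas into one chain. The central observation is that Lemma \ref{L:catoni} applied with $h = -\delta R_n$ identifies $\rho_\delta$ as the minimiser of $\rho \mapsto \int R_n d\rho + \K(\rho,\pi)/\delta$ over $\MP(\Theta,\T)$, yielding the key identity
\begin{equation*}
-\frac{1}{\delta}\log\int e^{-\delta R_n(\psi)}\pi(d\psi) \;=\; \inf_{\rho \in \MP}\left\{\int R_n d\rho + \frac{\K(\rho,\pi)}{\delta}\right\}.
\end{equation*}
This is the bridge between the posterior-dependent bounds of Lemmas \ref{L:randomized1} and \ref{L:aggregated1} and the oracle infimum appearing in the theorem.

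I would first apply Lemma \ref{L:randomized1} with $\rho = \rho_\delta$ to the randomised estimator. Using the explicit form of $d\rho_\delta/d\pi$ from \eqref{eq:gibbs-posterior} makes the two $R_n(\hat{\psi})$ terms cancel and leaves $-(\log Z)/\delta$ with $Z = \int e^{-\delta R_n}d\pi$, which the identity above rewrites as the desired infimum. Setting $c_1 = 4\delta(\sigma^2+C^2)/(n-w\delta)$ one obtains, on an event of probability $\geq 1-\varepsilon$,
\begin{equation*}
R(\hat{\psi}) - R(\psi^\star) \;\leq\; \frac{1}{1-c_1}\inf_{\rho}\left\{\int R_n d\rho - R_n(\psi^\star) + \frac{\K(\rho,\pi)+\log(1/\varepsilon)}{\delta}\right\}.
\end{equation*}
The very same right-hand side is obtained for $\hat{\psi}^{\agg}$ by applying Lemma \ref{L:aggregated1} at $\rho = \rho_\delta$ and using Jensen's inequality $R(\hat{\psi}^{\agg}) \leq \int R d\rho_\delta$, valid because $\psi \mapsto \E(Y_1 - \psi(\X_1))^2$ is convex.

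It only remains to convert $R_n$ back into $R$ inside the infimum, which is exactly the content of Lemma \ref{L:randomized2}. Since its proof passes through the master PAC-Bayes inequality $\E \int \exp[\cdots]\pi(d\psi) \leq \varepsilon$ in which $\rho$ never appears, the lemma's bound holds on a single event of probability $\geq 1-\varepsilon$ simultaneously for every $\rho \in \MP$. A union bound with the earlier event yields probability $\geq 1-2\varepsilon$; substituting into the previous display and then plugging in $\delta = n\ell/[w+4(\sigma^2+C^2)]$ fixes $c_1 \in (0,1)$ as a constant depending only on $w,\sigma,C,\ell$ and turns $1/\delta$ into a constant multiple of $1/n$, producing the advertised $\D$ and the inequality of the theorem. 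The main subtlety I expect to dwell on is exactly this uniformity-in-$\rho$ of Lemma \ref{L:randomized2}: without it the outer infimum cannot legitimately be passed through, since the near-minimiser of the infimum is a priori data-dependent. Luckily, the $\rho$-independence of the master inequality delivers this uniformity for free.
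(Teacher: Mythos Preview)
Your proposal is correct and follows essentially the same route as the paper's proof: apply \autoref{L:randomized1} (resp.\ \autoref{L:aggregated1} plus Jensen) at $\rho=\rho_\delta$, use the explicit form of $\mathrm{d}\rho_\delta/\mathrm{d}\pi$ so that the empirical-risk terms cancel and leave $-\log\int e^{-\delta R_n}\mathrm{d}\pi$, rewrite this via \autoref{L:catoni} as the infimum over $\rho$, and finally convert $R_n$ to $R$ inside the infimum using \autoref{L:randomized2} and a union bound. Your explicit remark on the uniformity-in-$\rho$ of \autoref{L:randomized2}---namely that the controlling event $\{\int e^{[\cdots]}\pi(\mathrm{d}\psi)\geq 1\}$ does not involve $\rho$, so the bound holds simultaneously for all $\rho$ on a single event---is a point the paper uses implicitly but does not spell out, and it is worth keeping.
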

\begin{proof}
Recall that the randomized Gibbs estimator
  $\hat{\Psi}$ is sampled from $\rho_\delta$. Denote by $\hat{\psi}$ a realization of the variable $\hat{\Psi}$. By \autoref{L:randomized1}, with $\Proba$-probability at least $1-\varepsilon$,
  \begin{equation*}
    R(\hat{\psi}) - R(\psi^\star)
    \leq \frac{1}
    {1 - \frac{4\delta(\sigma^2+C^2)}{n-w\delta}}
    \left(R_n(\hat{\psi}) - R_n(\psi^\star) +
      \frac{\log\frac{\mathrm{d}\rho_\delta}{\mathrm{d}\pi}
        (\hat{\psi}) + \log\frac{1}{\varepsilon}}{\delta}\right).
  \end{equation*}
  Note that
  \begin{align*}
    \log\frac{\mathrm{d}\rho_\delta}{\mathrm{d}\pi}(\hat{\psi})
    &= \log\frac{\exp[-\delta R_n(\hat{\psi})]}{\int \exp[-\delta
      R_n(\psi)]\pi(\mathrm{d}\psi)} \\ &= -\delta R_n(\hat{\psi})-\log\int \exp[-\delta
    R_n(\psi)]\pi(\mathrm{d}\psi).
  \end{align*}
  Thus, with $\Proba$-probability at
  least $1-\varepsilon$,
  \begin{multline*}
    R(\hat{\psi}) - R(\psi^\star)
    \leq \frac{1}
    {1 - \frac{4\delta(\sigma^2+C^2)}{n-w\delta}}
    \left(- R_n(\psi^\star) -
      \frac{1}{\delta}\log\int \exp[-\delta
      R_n(\psi)]\pi(\mathrm{d}\psi) \right. \\ \left. +
      \frac{1}{\delta}\log\frac{1}{\varepsilon}
    \right).
  \end{multline*}
  By \autoref{L:catoni}, with
  $\Proba$-probability at
  least $1-\varepsilon$,
  \begin{multline*}
    R(\hat{\psi}) - R(\psi^\star)
    \leq \frac{1}
    {1 - \frac{4\delta(\sigma^2+C^2)}{n-w\delta}}
    \, \underset{\rho\in\MP(\Theta,\mathcal{T})}{\inf}
    \left(\int R_n(\psi)\rho(\mathrm{d}\psi)
      - R_n(\psi^\star)\right. \\ \left. +
      \frac{\K(\rho,\pi) + \log\frac{1}{\varepsilon}}{\delta}
    \right).
  \end{multline*}
  Finally, by \autoref{L:randomized2}, with $\Proba$-probability at least $1-2\varepsilon$,
  \begin{multline*}
    R(\hat{\psi}) - R(\psi^\star)
    \leq \frac{1 +\frac{4\delta(\sigma^2+C^2)}{n-w\delta}}
    {1 - \frac{4\delta(\sigma^2+C^2)}{n-w\delta}}
    \, \underset{\rho\in\MP(\Theta,\mathcal{T})}{\inf}
    \left\{\int R(\psi)\rho(\mathrm{d}\psi) - R(\psi^\star)
    \right. \\ \left.+ \frac{2}{1 +\frac{4\delta(\sigma^2+C^2)}{n-w\delta}}\frac{\K(\rho,\pi) + \log\frac{1}{\varepsilon}}{\delta}
    \right\}.
  \end{multline*}
 Apply \autoref{L:aggregated1} with the Gibbs posterior probability defined by \eqref{eq:gibbs-posterior}. With
  $\Proba$-probability at least $1-\varepsilon$,
  \begin{multline*}
    \int R(\psi)\rho_\delta(\mathrm{d}\psi) - R(\psi^\star)\leq \frac{1}
    {1 - \frac{4\delta(\sigma^2+C^2)}{n-w\delta}}\left(\int R_n(\psi)\rho_\delta(\mathrm{d}\psi)
      - R_n(\psi^\star)\right. \\ \left. +
      \frac{\K(\rho_{\delta},\pi)+\log\frac{1}{\varepsilon}}{\delta} \right).
  \end{multline*}
  Note that
  \begin{equation*}
    \begin{aligned}
      \K(\rho_{\delta},\pi)&=\int\log\frac{\exp[-\delta R_n(\psi)]}
      {\int \exp[-\delta
        R_n(\psi)]\pi(\mathrm{d}\psi)}\rho_\delta(\mathrm{d}\psi)\\
      &=-\delta\int R_n(\psi)\rho_\delta(\mathrm{d}\psi)
      - \log\left(\int \exp[-\delta
        R_n(\psi)]\pi(\mathrm{d}\psi) \right).
    \end{aligned}
  \end{equation*}
  By \autoref{L:catoni}, with $\Proba^{\otimes
    n}$-probability at least $1-\varepsilon$
  \begin{multline*}
    \int R(\psi)\rho_\delta(\mathrm{d}\psi) - R(\psi^\star)\leq \frac{1}
    {1 - \frac{4\delta(\sigma^2+C^2)}{n-w\delta}}
    \, \underset{\rho\in\MP(\Theta,\mathcal{T})}{\inf}
    \left\{\int R_n(\psi)\rho(\mathrm{d}\psi)\right. \\ \left.
      - R_n(\psi^\star) +
      \frac{\K(\rho,\pi)+\log\frac{1}{\varepsilon}}{\delta} \right\}.
  \end{multline*}
  By \autoref{L:randomized2}, with $\Proba^{\otimes
    n}$-probability at least $1-2\varepsilon$
    \begin{multline*}
      \int R(\psi)\rho_\delta(\mathrm{d}\psi) - R(\psi^\star)\leq \frac{1 +\frac{4\delta(\sigma^2+C^2)}{n-w\delta}}
      {1 - \frac{4\delta(\sigma^2+C^2)}{n-w\delta}}
      \, \underset{\rho\in\MP(\Theta,\mathcal{T})}{\inf}
      \left\{
        \int R(\psi)\rho(\mathrm{d}\psi) \right. \\
      \left. -
          R(\psi^\star)  + \frac{2}{1 +\frac{4\delta(\sigma^2+C^2)}{n-w\delta}}\frac{\K(\rho,\pi)+\log\frac{1}{\varepsilon}}{\delta}
      \right\}.
    \end{multline*}
  As $R$ is a convex function, applying Jensen's inequality gives
  \begin{equation*}
    \int R(\psi)\rho_\delta(\mathrm{d}\psi) \geq R(\hat{\psi}^{\agg}).
  \end{equation*}
  Finally, note that
  \begin{equation*}
     \frac{1+\frac{4\delta(\sigma^2+C^2)}{n-w\delta}}{1-\frac{4\delta(\sigma^2+C^2)}{n-w\delta}}=1+\frac{8\ell(\sigma^2+C^2)}{(1-\ell)(w+4\sigma^2+4C^2)}.
    \end{equation*}
\end{proof}
\begin{proof}[Proof of \autoref{T:additivemodels}]
Let $\rho\in\MP(\Theta,\mathcal{T})$. For any $A\in\mathcal{T}$, note that $\rho(A)=\sum_{\m\in\M}\rho_{\m}(A)$ where $\rho_{\m}(\cdot)=\rho(\cdot\cap \Theta_{\m})$, the trace of $\rho$ on $\Theta_{\m}$.
%  As
%   \begin{equation*}
%    \{\rho\in\MP(\Theta,\mathcal{T})\}\subset\bigcup_{\m\in\M}\{\rho\in\MP(\Theta_{\m},\B(\Theta_{\m})\},
%   \end{equation*}
%In \ref{T:regression}, as the infimum on $\rho\in\MP(\Theta,\mathcal{T})$ is smaller than the infimum on $\m\in\M$ and on $\rho\in\MP(\Theta_{\m},\B(\Theta_{\m}))$,
By \autoref{T:regression}, with $\Proba$-probability at least $1-2\varepsilon$
\begin{multline}\label{res}
      R(\hat{\psi}) - R(\psi^\star)
      \leq \D
      \, \underset{\m\in\M}{\inf}   \
      \underset{\rho\in\MP(\Theta,\mathcal{T})}{\inf} \\
      \left\{
      \int R(\psi)\rho_{\m}(\mathrm{d}\psi) - R(\psi^\star)
      + \frac{\K(\rho_{\m},\pi) + \log\frac{1}{\varepsilon}}{n}
      \right\}.
  \end{multline}
  Note that for any $\rho\in\MP(\Theta,\mathcal{T})$ and any $\m\in\M$,
    \begin{multline*}
      \K(\rho_{\m},\pi)  = \int
      \log\left(\frac{\mathrm{d}\rho_{\m}}{\mathrm{d}\pi_{\m}}
      \right)\mathrm{d}\rho_{\m} + \int
      \log\left(\frac{\mathrm{d}\pi_{\m}}{\mathrm{d}\pi}
      \right)\mathrm{d}\rho_{\m} \\
      = \K(\rho_{\m},\pi_{\m}) + \log(1/\alpha)\sum_{j\in S(\m)} m_j+\log\binom{p}{|S(\m)|} +
      \log\left(\frac{1-\left(\frac{\alpha}{1-\alpha}\right)^{p+1}}{1-\frac{\alpha}{1-\alpha}}\right).
      \end{multline*}
      Next, using the elementary inequality $\log\binom{n}{k}\leq k\log(ne/k)$ and that $\frac{\alpha}{1-\alpha}<1$,
      \begin{multline*}
      \K(\rho_{\m},\pi)\leq \K(\rho_{\m},\pi_{\m})+ \log(1/\alpha)\sum_{j\in S(\m)}m_j  +|S(\m)|\log\left(\frac{pe}{|S(\m)|}\right)%\left(\frac{pe}{|S(\m)|}\right)
      \\ +\log\left(\frac{1-\alpha}{1-2\alpha}\right).
    \end{multline*}
    We restrict the set of all probabilities absolutely continuous with respect to $\pi_{\m}$ to uniform probabilities on the ball $\B_{\m}^1(\x,\zeta)$, with $\x\in\B_{\m}^1(0,C)$ and $0<\zeta\leq C-\|\theta\|_1$. Such a probability is denoted by $\mu_{\x,\zeta}$.
  With $\Proba$-probability at least
  $1-2\varepsilon$, it yields that
  \begin{multline*}
      R(\hat{\psi}) - R(\psi^\star)
      \leq \D
      \, \underset{\m\in\M}{\inf} \, \underset{\theta\in\B_{\m}^1(0,C)}{\inf}  \
      \underset{\mu_{\theta,\zeta}, 0<\zeta\leq C-\|\theta\|_1}{\inf}
      \left\{
      \int R(\psi_{\bar{\theta}})\mu_{\theta,\zeta}(\mathrm{d}\bar{\theta})- \right. \\ \left.  R(\psi^\star)
       + \frac{1}{n}\left[
      \K(\mu_{\theta,\zeta},\pi_{\m}) + \log\frac{1}{\varepsilon}+|S(\m)|\log\left(\frac{p}{|S(\m)|}\right)+\sum_{j\in S(\m)} m_j
    \right]
      \right\}.
  \end{multline*}
  Next, note that
    \begin{align*}
      \K(\mu_{\theta,\zeta},\pi_{\m})
%      \int
%      \log\left(\frac{\mathrm{d}u_{\theta,\zeta}}      {\mathrm{d}\pi_{\m}}\right)\mathrm{d}u_{\theta,\zeta} \\
     = \log\left(\frac{V_{\m}(C)}{V_{\m}(\zeta)}\right)
   = \log\left(\frac{C}{\zeta}\right)\sum_{j\in S(\m)}m_j.
    \end{align*}
  Note also that
    \begin{align*}
      \int R(\psi_{\bar{\theta}})
      \mu_{\theta,\zeta}(\mathrm{d}\bar{\theta}) &= \int \E \left[Y_1 -
        \psi_{\bar{\theta}}(\X_1)
      \right]^2\mu_{\theta,\zeta}(\mathrm{d}\bar{\theta}) \\
      &=\int \E \left[Y_1 -
        \psi_{\theta}(\X_1) + \psi_{\theta}(\X_1) - \psi_{\bar{\theta}}(\X_1)
      \right]^2\mu_{\theta,\zeta}(\mathrm{d}\bar{\theta}),
      \end{align*}
      and
      \begin{multline*}
      \int \E \left[Y_1 -
        \psi_{\theta}(\X_1) + \psi_{\theta}(\X_1) - \psi_{\bar{\theta}}(\X_1)
      \right]^2\mu_{\theta,\zeta}(\mathrm{d}\bar{\theta}) \\ = \int R(\psi_{\theta})\mu_{\theta,\zeta}(\mathrm{d}\bar{\theta})
        + \int \E \left[\psi_{\theta}(\X_1) - \psi_{\bar{\theta}}(\X_1)
      \right]^2\mu_{\theta,\zeta}(\mathrm{d}\bar{\theta}) \\
      + 2\int \E \{ [Y_1-\psi_{\theta}(\X_1)][\psi_{\theta}(\X_1) -
        \psi_{\bar{\theta}}(\X_1)]\}\mu_{\theta,\zeta}(\mathrm{d}\bar{\theta}).
  \end{multline*}
  Since $\bar{\theta}\in \B_{\m}^1(\theta,\zeta)$,
  \begin{multline*}
      \int \E \left[\psi_{\theta}(\X_1) - \psi_{\bar{\theta}}(\X_1)
      \right]^2\mu_{\theta,\zeta}(\mathrm{d}\bar{\theta}) \\ = \int \E
      \left[\sum_{j\in
          S(\m)}\sum_{k=1}^{m_j}(\theta_{jk}-\bar{\theta}_{jk})\varphi_k(X_{1j})
      \right]^2\mu_{\theta,\zeta}(\mathrm{d}\bar{\theta}) \\
      \leq \|\theta-\bar{\theta}\|_1^2\max_{k}\|\varphi_k\|^2_\infty
      \leq \zeta^2,
  \end{multline*}
  and by the Fubini-Tonelli theorem,
  \begin{multline*}
      2\int \E \{ [Y_1-\psi_{\theta}(\X_1)][\psi_{\theta}(\X_1) -
        \psi_{\bar{\theta}}(\X_1)]\}\mu_{\theta,\zeta}(\mathrm{d}\bar{\theta}) \\
      = 2\E \left[ [Y_1 - \psi_{\theta}(\X_1)]\int [\psi_{\theta}(\X_1) - \psi_{\bar{\theta}}(\X_1)]\mu_{\theta,\zeta}(\mathrm{d}\bar{\theta}) \right] = 0,
  \end{multline*}
  since $\int
  \psi_{\bar{\theta}}(\X_1)\mu_{\theta,\zeta}(\mathrm{d}\bar{\theta}) =
  \psi_{\theta}(\X_1)$. Consequently, as
  \begin{equation*}
    \int R(\psi_{\theta})\mu_{\theta,\zeta}(\mathrm{d}\bar{\theta}) = R(\psi_{\theta}),
  \end{equation*}
  we get
  \begin{equation*}
    \int R(\psi_{\bar{\theta}})
    \mu_{\theta,\zeta}(\mathrm{d}\bar{\theta}) \leq R(\psi_{\theta}) + \zeta^2.
  \end{equation*}
  So with $\Proba$-probability at least
  $1-2\varepsilon$,
  \begin{multline*}
      R(\hat{\psi}) - R(\psi^\star)
      \leq \D
      \, \underset{\m\in\M}{\inf} \, \underset{\theta\in\B_{\m}^1(0,C)}{\inf}  \
      \underset{\mu_{\theta,\zeta}, 0<\zeta\leq C-\|\theta\|_1}{\inf}
      \Bigg\{
      R(\psi_{\theta}) + \zeta^2  - R(\psi^\star)  \\
      + \frac{1}{n}\left[ \log(C/\zeta)\sum_{j\in S(\m)} m_j+ \log\frac{1}{\varepsilon}+|S(\m)|\log\left(\frac{p}{|S(\m)|}\right)+\sum_{j\in S(\m)} m_j
    \right]
      \Bigg\}.
  \end{multline*}
  The function $t\mapsto t^2 + \log(C/t)\sum_{j\in S(\m)}m_j/n$ is convex. Its minimum is unique and is reached for $t=[\sum_{j\in S(\m)}m_j/(2n)]^{1/2}$.
  With $\Proba$-probability at least
  $1-2\varepsilon$,\vadjust{\vfill{\eject}}
 \begin{multline*}
      R(\hat{\psi}) - R(\psi^\star)
      \leq \D
      \, \underset{\m\in\M}{\inf} \, \underset{\theta\in\B_{\m}^1(0,C)}{\inf}  \
      \left\{ R(\psi_{\theta}) - R(\psi^\star) \vphantom{\frac{1}{2}} \right. \\ \left. +|S(\m)|\frac{\log(p/|S(\m)|)}{n}+\frac{\log(n)}{n}\sum_{j\in S(\m)}m_j+\frac{\log(1/\varepsilon)}{n} \right\},
  \end{multline*}
  where $\D$ is a constant depending only on $w$, $\sigma$, $C$, $\ell$ and $\alpha$.
  As the same inequality holds for $\hat{\psi}^{\agg}$, this concludes the proof.
\end{proof}
\begin{proof}[Proof of \autoref{T:sob}]
 Recall \autoref{T:additivemodels}.
  \autoref{As:hyp3} gives
  \begin{equation*}
      R(\psi_\theta)-R(\psi^\star) = \int (\psi_\theta(\mathbf{x}) -
      \psi^\star(\mathbf{x}))^2\mathrm{d}\ProbaData(\mathbf{x})
      \leq B \int(\psi_\theta(\mathbf{x}) -
      \psi^\star(\mathbf{x}))^2\mathrm{d}\mathbf{x}.
  \end{equation*}
  For any $\m\in\M$, define
  \begin{equation*}
    \psi^\star_{\m} = \sum_{j\in S^\star}\sum_{k=1}^{m_j}\theta_{jk}^\star\varphi_k.
  \end{equation*}
  To proceed, we need to check that the projection of $\theta^\star$
  onto model $\m$ lies in $\B_{\m}^1(0,C)$, \ie,
  \begin{equation*}
    \sum_{j\in S^\star}\sum_{k=1}^{m_j}|\theta_{jk}^\star|\leq C.
  \end{equation*}
  Using the Cauchy-Schwarz inequality, we get
  \begin{align*}
    \sum_{j\in S^\star}\sum_{k=1}^{m_j}|\theta_{jk}^\star| &=
    \sum_{j\in S^\star}\sum_{k=1}^{m_j}k^{r_j}|\theta_{jk}^\star|k^{-r_j} \\
    &\leq\sum_{j\in S^\star}\left[\sqrt{\sum_{k=1}^{m_j}k^{2
          r_j}(\theta_{jk}^\star)^2}\sqrt{\sum_{k=1}^{m_j}k^{-2 r_j}}\right].
  \end{align*}
  Since for any $t\geq 1$, $\sum_{k=1}^{m_j}k^{-2t}\leq
  \sum_{k=1}^\infty k^{-2t}=\pi^2/6$, the previous inequality yields
  \begin{equation*}
    \sum_{j\in S^\star}\sum_{k=1}^{m_j}|\theta_{jk}^\star|\leq
    \frac{\pi}{\sqrt{6}}\sum_{j\in S^\star}\sqrt{d_j}\leq C.
  \end{equation*}
  Recalling \eqref{eq:pythagore} and \autoref{As:hyp3}, for a $\m\in\M$ we may now write that
  \begin{multline*}
    \inf_{\substack{\theta\in\Theta_{\m}}}
    R(\psi_\theta) - R(\psi^\star) \leq R(\psi^\star_{\m}) -
    R(\psi^\star)
    \leq B\int (\psi^\star(\mathbf{x})-
    \psi^\star_{\m}(\mathbf{x}))^2\mathrm{d}\mathbf{x} \\
    = B\int \left(\sum_{j\in S^\star}\sum_{k=1+m_j}^\infty\theta_{jk}^\star\varphi_k(\mathbf{x})
    \right)^2\mathrm{d}\mathbf{x}.
  \end{multline*}
  As
  $\{\varphi_k\}_{k=1}^\infty$ forms an orthogonal basis,
  \begin{align*}
    B\int \left(\sum_{j\in S^\star}\sum_{k=1+m_j}^\infty\theta_{jk}^\star\varphi_k(\mathbf{x})
    \right)^2\mathrm{d}\mathbf{x}&=B\sum_{j\in S^\star}\sum_{k=1+m_j}^\infty(\theta_{jk}^\star)^2 \\
    &\leq B\sum_{j\in S^\star}d_j(1+m_j)^{-2r_j},
  \end{align*}
  where the normalizing numerical factors are included in the now
  generic constant $B$.
  As a consequence, with $\Proba$-probability at least
  $1-2\varepsilon$,
  \begin{multline*}
    R(\hat{\psi}) - R(\psi^\star)
    % &\leq \D \underset{\m\in\M}{\inf}
    % \left\{ B\sum_{j\in S^\star}d_j(1+m_j)^{-2r_j} + \frac{\log(n)}{n}\sum_{j=1}^p m_j+\frac{\log(p)}{n}|S(\m)|+\frac{\log\frac{1}{\varepsilon}}{n} \right\} \\
    \leq \D \underset{\m\in\M}{\inf}
    \left\{ B\sum_{j\in S^\star} \left\{d_j(1+m_j)^{-2r_j} + \frac{m_j}{n}\log(n)\right\} \right. \\ \left. +|S^\star|\frac{\log(p/|S^\star|)}{n}+\frac{\log(1/\varepsilon)}{n} \right\},
  \end{multline*}
where $\D$ is the same constant as in \autoref{T:additivemodels}. For
any $r\geq 2$, the function $t\mapsto
d_j(1+t)^{-2r_j}+\frac{\log(n)}{n}t$ is convex and admits a minimum in
$\bigl(\frac{\log(n)}{2r_j d_j n}\bigr)^{-\frac{1}{2r_j +1}}-1$.
Accordingly, choosing $m_j\sim\bigl(\frac{\log(n)}{2r_j d_j
n}\bigr)^{-\frac{1}{2r_j +1}}-1$ yields that with $\Proba$-probability
at least
  $1-2\varepsilon$,
  \begin{equation*}
    R(\hat{\psi}) - R(\psi^\star)
    \leq \D
    \left\{ \sum_{j\in S^\star} d_j^{\frac{1}{2r_j+1}}\left(\frac{\log(n)}{2nr_j}\right)^{\frac{2r_j}{2r_j+1}}
      +|S^\star|\frac{\log\left(\frac{p}{|S^\star|}\right)}{n} +\frac{\log(1/\varepsilon)}{n} \right\},
  \end{equation*}
  where $\D$ is a constant depending only on $\alpha$,
  $w$, $\sigma$, $C$, $\ell$ and $B$, and that ends the proof.
\end{proof}
\begin{proof}[Proof of \autoref{T:coro}]
  The proof is similar to the proof of \autoref{T:additivemodels}. From \eqref{res} and for any $\rho\in\MP(\Theta,\mathcal{T})$ and any $\m\in\M$,
  \begin{multline*}
    \K(\rho_{\m},\pi)
    = \K(\rho_{\m},\pi_{\m}) + \log(1/\alpha)|S(\m)|+\log\binom{p}{|S(\m)|} \\ +
    \log\left(\frac{1-\left(\alpha\frac{1-\alpha^{K+1}}{1-\alpha}\right)^{p+1}}{1-\alpha\frac{1-\alpha^{K+1}}{1-\alpha}}\right)
    + \sum_{j\in S(\m)}\log\binom{K}{|S(\m_j)|}.
  \end{multline*}
  Using the elementary inequality $\log\binom{n}{k}\leq k\log(ne/k)$ and that $\alpha\frac{1-\alpha^{K+1}}{1-\alpha}\in (0,1)$ since $\alpha<1/2$,
  \begin{multline*}
    \K(\rho_{\m},\pi)\leq \K(\rho_{\m},\pi_{\m})+ |S(\m)|\left[\log(1/\alpha) +\log\left(\frac{pe}{|S(\m)|}\right)\right]%\left(\frac{pe}{|S(\m)|}\right)
    \\ +\sum_{j\in S(\m)}|S(\m_j)|\log\left(\frac{Ke}{|S(\m_j)|}\right) + \log\left(\frac{1-\alpha}{1-2\alpha}\right).
  \end{multline*}
  Thus with $\Proba$-probability at least
  $1-2\varepsilon$,
  \begin{multline*}
    R(\hat{\psi}) - R(\psi^\star)
    \leq \D
    \, \underset{\m\in\M}{\inf} \, \underset{\theta\in\B_{\m}^1(0,C)}{\inf}  \
    \underset{\mu_{\theta,\zeta}, 0<\zeta\leq C-\|\theta\|_1}{\inf}
    \Bigg\{
    R(\psi_{\theta}) + \zeta^2  - R(\psi^\star)  \\
    + \frac{1}{n}\Biggl[ \left[\log(C/\zeta)+\log(K)\right]\sum_{j\in S(\m)} |S(\m_j)|+ \log\frac{1}{\varepsilon}+|S(\m)|\log\left(\frac{p}{|S(\m)|}\right)
    \Biggr]
    \Bigg\}.
  \end{multline*}
  Hence with $\Proba$-probability at least
  $1-2\varepsilon$,
  \begin{multline*}
    \left. \begin{array}{l}
        R(\hat{\psi})- R(\psi^\star)
        \\ R(\hat{\psi}^{\agg})- R(\psi^\star)
      \end{array} \right\}
    \leq \D \underset{\m\in\M}{\inf}\ \underset{\theta\in\B_{\m}^1(0,C)}{\inf}
    \Bigg\{ R(\psi_{\theta}) - R(\psi^\star) \vphantom{\frac{1}{2}} %\right.
    \\ %\left.
    +|S(\m)|\frac{\log(p/|S(\m)|)}{n}+\frac{\log(nK)}{n}\sum_{j\in S(\m)}|S(\m_j)|+\frac{\log(1/\varepsilon)}{n}
    \Bigg\},
  \end{multline*}
  where $\D$ is a numerical constant depending upon $w$, $\sigma$, $C$, $\ell$ and $\alpha$.
\end{proof}

\section*{Acknowledgements} The authors are grateful to G\'erard Biau
and Éric Moulines for their constant implication, and to Christophe
Giraud and Taiji Suzuki for valuable insights and comments. They also
thank an anonymous referee and an associate editor for providing constructive and helpful remarks.

% AOS,AOAS: If there are supplements please fill:
%\begin{supplement}[id=suppA]
%  \sname{Supplement A}
%  \stitle{Title}
%  \slink[url]{http://lib.stat.cmu.edu/aoas/???/???}
%  \sdescription{Some text}
%\end{supplement}

\bibliographystyle{imsart-nameyear}
% \bibliographystyle{imsart-number}
% \bibliography{/home/benjamin/Dropbox/Pro/Recherche/biblio}
\bibliography{biblio}

\end{document}